\theoremstyle{definition}
\newtheorem{example}{Example}
\newtheorem{proposition}{Proposition}
\newtheorem{lemma}{Lemma}
\newtheorem{definition}{Definition}
\newtheorem{assumption}{Assumption}
\newtheorem{remark}{Remark}
\DeclareMathOperator{\sgn}{sgn}
\DeclareMathOperator{\supp}{supp}
\newcommand{\cC}{\mathcal{C}}
\newcommand{\cE}{\mathcal{E}}
\newcommand{\bG}{\mathbf{G}}
\newcommand{\bH}{\mathbf{H}}
\newcommand{\be}{\mathbf{e}}
\newcommand{\bz}{\mathbf{z}}
\newcommand{\bhe}{\hat{\mathbf{e}}}
\newcommand{\br}{\mathbf{r}}
\newcommand{\bp}{\mathbf{p}}
\newcommand{\bh}{\mathbf{h}}
\newcommand{\bs}{\mathbf{s}}
\newcommand{\bhc}{\hat{\mathbf{c}}}
\newcommand{\bc}{\mathbf{c}}
\newcommand{\bv}{\mathbf{v}}
\newcommand{\cS}{\mathcal{S}}
\newcommand{\cP}{\mathcal{P}}
\newcommand{\bzero}{\mathbf{0}}
\newcommand{\bone}{\mathbf{1}}
\newcommand{\ft}{\mathbb{F}_2}
\newcommand{\cI}{\mathcal{I}}
\def\BibTeX{{\rm B\kern-.05em{\sc i\kern-.025em b}\kern-.08em
    T\kern-.1667em\lower.7ex\hbox{E}\kern-.125emX}}
\title{%\LARGE 
%\bf
Segmented GRAND: Complexity Reduction through Sub-Pattern Combination 
%Guessing Random Noise by Constraining Search Space
%ORBGRAND with Constrained Search Space
%Guessing Random Noise under Constraints
%Near Maximum Likelihood Decoding under ORBGRAND with Constrained Search Space
%Search Space Reduction for Near Maximum Likelihood Decoding under ORBGRAND
%Partitioned Search Space for Near Maximum Likelihood Decoding under ORBGRAND
}
\author{
\IEEEauthorblockN{Mohammad Rowshan %, {\em Member, IEEE} 
and Jinhong Yuan, {\em Fellow, IEEE}}
 %\IEEEauthorblockA{School of Electrical Eng. and Telecom., University of New South Wales (UNSW), Sydney, Australia\\
 %\{m.rowshan,j.yuan\}@unsw.edu.au}
 %\thanks{Mohammad Rowshan and Emanuele Viterbo are with the  department of electrical and computer systems engineering (ECSE), Monash University, Melbourne, VIC 3800, Australia. (e-mail: \{mohammad.rowshan, emanuele.viterbo\}@monash.edu).}
 %\thanks{Andreas Burg is with the Telecommunications Circuits Laboratory (TCL), Swiss Federal Institute of Technology (EPFL), Lausanne 1015, Switzerland (e-mail: andreas.burg@epfl.ch).}
 \thanks{Mohammad Rowshan and Jinhong Yuan are with the school of electrical engineering and Telecommunications, University of New South Wales (UNSW) in Sydney, Australia. (e-mail: \{m.rowshan,j.yuan\}@unsw.edu.au).}
 %\thanks{The work was supported in part by the Australian Research Council (ARC) Discovery Project under Grant DP220103596.}
  \thanks{The work was supported in part by the Australian Research Council (ARC) Discovery Project under Grant DP220103596, and in part by the ARC Linkage Project under Grant LP200301482.}
 \thanks{This paper was presented in part at the 2023 IEEE Global Communication Conference (GLOBECOM), Kuala Lumpur, Malaysia \cite{rowshan2023low}.}
}
\begin{document}

\maketitle
\thispagestyle{fancy}
\fancyhf{}
\lhead{Accepted for publication in a forthcoming issue of IEEE Transactions on Communications. This version contains two additional sections.}
\cfoot{}

%%%%%%%%%%%%%%%%%%%%%%%%%%%%%%%%%%%%%%%%%%%%%%%%%%%%%%%%%%%%%%%%%%%%%%%%%%%%%%%%
\begin{abstract} % Should be at most 200 words (for TCOM)
%Guessing the correct error pattern in a sequence received from a noisy channel may require numerous attempts. %as the code-length increases. 
The ordered-reliability bits (ORB) variant of guessing random additive noise decoding (GRAND), known as ORBGRAND, achieves remarkably low time complexity at high code rates compared to other GRAND variants. However, its computational complexity remains higher than other near-ML universal decoders like ordered-statistics decoding (OSD). To address this, we propose segmented ORBGRAND, which partitions the error pattern search space based on code properties, generates syndrome-consistent sub-patterns (reducing invalid error patterns), and combines them in a near-ML order using sub-weights derived from two-level integer partitions of logistic weight. Numerical results show that segmented ORBGRAND reduces the average number of queries by at least 66\% across all SNRs and cuts basic operations by over an order of magnitude, depending on segmentation and code rate. Further efficiency gains come from leveraging pre-generated shared sub-patterns, reducing average decoding time. Furthermore, with abandonment ($b=10^{5}$ or smaller), segmented ORBGRAND provides a 0.2 dB power gain over ORBGRAND. Additionally, we provide an analytical justification for why the logistic weight-based ordering of error patterns in ORBGRAND closely approximates the ML order and discuss the underlying assumptions of ORBGRAND.  

\end{abstract}

\begin{IEEEkeywords}
Error pattern, segment, integer partition, guessing random additive noise decoding, GRAND, ORBGRAND, ordered statistics decoding, maximum likelihood decoding, complexity.%, PAC codes, extended BCH codes.
\end{IEEEkeywords}
%%%%%%%%%%%%%%%%%%%%%%%%%%%%%%%%%%%%%%%%%%%%%%%%%%%%%%%%%%%%%%%%%%%%%%%%%%%%%%%%

%% The paper must be self-contained. However, if you are referring to
%% a full version for checking certain proofs, please provide the
%% publically accessible location below.  If the paper is completely
%% self-contained, you can remove the following line from your submission.
%\textit{A full version of this paper is accessible at: \url{http://www.isit2018.org/}}

\section{INTRODUCTION}
\label{sec:intro}
Soft decision-based decoding algorithms can be classified into two major categories \cite{lin}: Code structure-based algorithms and reliability-based algorithms or generic decoding algorithms as they usually do not depend on the code structure. In the generic algorithms a.k.a universal algorithms, which is the focus of this paper, the goal is to find the closest modulated codeword to the received sequence using a metric such as the likelihood function. That is, we try to maximize the likelihood in the search towards finding the transmitted sequence. Hence, this category of decoding algorithm is called maximum likelihood (ML) decoding which is known as an optimal decoding approach. Maximum likelihood decoding has been an attractive subject for decades among researchers. Error sequence generation is one of the central problems in any ML decoding scheme. %The order that the error sequences are generated follows the likelihood metric in descending order. 
The brute-force approach for ML decoding of a linear $(n,k)$ block code requires the computation of likelihood or Euclidean distances of $2^k$  modulated codewords from the received sequence. In general, ML decoding is prohibitively complex for most codes as it was shown to be an NP-complete problem \cite{berlekamp}. 
Hence, the main effort of the researchers has been concentrated on reducing the algorithm's complexity for short block-lengths. Although there are approaches in which optimal performance is preserved, ML performance can be traded off for significant complexity reduction. Here, we review some of the notable efforts toward complexity reduction in the past decades. 

Forney proposed the generalized minimum distance (GMD) decoding algorithm in 1966 \cite{forney}, where a list of candidate codewords based on the reliability of the received symbols was produced using an algebraic decoder. %Each candidate was then tested with respect to a sufficient condition for optimality and the most likely candidate is selected as the decoded codeword. 
In 1972, Chase proposed a method \cite{chase} in which the search was performed among a fixed number of error patterns corresponding to a particular number of least reliable bit positions with respect to the minimum distance $d$ of the underlying code. 
%These error patterns are selected on the basis of the reliability of the received symbols. Each error pattern is added to the hard-decision received word and decoded using a hard-decision decoder. Each decoded code word is scored by computing is metric with respect to the received SD sequence. The code word with the best metric is selected as the most likely. 
Chase classified his algorithm into three types, as per the error pattern generation. %; Type-I tested all error patterns within the radius $d-1$ from the received sequence. In Type-II, the error patterns were chosen with at most $\lfloor d/2 \rfloor$ errors located within the bit positions having the $\lfloor d/2 \rfloor$ lowest reliabilities. The number of error patterns reduced significantly, compared to Type-I, at the expense of a slight degradation in performance. Type-III considered the error patterns with $i$ ones in the $i$ least reliable bit positions, where $i$ was odd and $1 \leq i \leq d-1$. 
In another effort, Snyders in 1989 \cite{snyders} proposed to perform syndrome decoding on the received sequence and then use the syndrome information to modify and improve the original hard-decision decoding. %One of the proposed schemes ordered the information bits according to their reliability to guide the search for the most likely codeword. 

The best-known generic decoding algorithm is perhaps the information set decoding (ISD) algorithm proposed by Prange in 1962 \cite{prange}, which was improved %over time by many researchers. %
by Stern in 1989 \cite{stern} and Dumer in 1991 \cite{dumer}. 
%A significant complexity reduction {\color{blue}may} come at the cost of error correction performance degradation. %Due to this degradation, % This degradation makes the decoding process was no longer ML decoding but so callSoft-decision-based. 
%From a paper: The best-known generic decoding algorithms all derive from the Information Set Decoding (ISD) algorithm proposed by Prange in 1962. The ISD algorithm was later improved by Stern in 1989 (and Dumer in 1991). Those last few years, some significant improvements have occurred. First by May, Meurer, and Thomae at Asiacrypt 2011, then by Becker, Joux, May, and Meurer at Eurocrypt 2012, and finally by May and Ozerov at Eurocrypt 2015
%Most of the known generic decoding algorithms were drived from information set decoding (ISD) algorithm proposed by Prange \cite{dorsch} in 1962. 
Following this approach, other generic decoding approaches were developed based on the most reliable basis (MRB), defined as the support of the most reliable independent positions (MRIPs) of the received sequence, hence forming an information set. In these approaches, each error pattern is subtracted from the hard decision of the MRIPs and the corresponding codeword is reconstructed by encoding the corresponding information sequence. In 1974, Dorsch \cite{dorsch} considered error patterns restricted to the MRB in increasing a priori likelihood. 
Following this approach, Fossorier and Lin in 1995 \cite{fossorier} proposed processing the error patterns in a deterministic order within families of increasing Hamming weight. This algorithm, which is referred to as ordered statistics decoding (OSD), is  %and linear programming \cite{feldman} are 
one of the most popular generic decoding algorithms nowadays. The OSD algorithm permutes the columns of the generator matrix with respect to the reliability of the symbols for every received vector and performs elementary row operations on the independent columns extracted from the permuted generator matrix resulting in the systematic form.  %Then, the hard decision codeword plus an error sequence conalgorithmased on the reliability of $\mathbf{r}$ is obtained and modulated to get $\hat{\mathbf{m}}$. 
The testing error patterns can have a Hamming weight of up to $l, 0\leq i \leq k$ in $i$-order OSD, chosen from the most reliable $k$ positions. 
Apparently, the main drawback of OSD is the use of row operations to put either the generator matrix or the parity check matrix of the code into systematic form. The complexity of row operation for an $(n,k)$ linear block code is $O(n^3 \min\{R, 1-R\}^2)$ where R is the code rate.  
However, since overall complexity is an exponential function of code length, this preprocessing complexity is negligible.  Moreover, having information set in a systematic form is needed only for simplifying further the decoding attempts. Otherwise, the error patterns can be checked without this preprocessing. 
The OSD algorithm further evolved in 2004 into box-and-match algorithm (BMA) \cite{valembois} and enhanced BMA \cite{jin} where the matching technique was used to reduce time complexity at the cost of space complexity. There are efficient and fast hardware implementation available for OSD such as \cite{kim2021fpga} which reduces the latency by 12 times. The matching techniques were employed for fast decoding of polar codes with Reed-Solomon kernel in \cite{trifonov18match}. %It is worth mentioning that 
%We propose an algorithm to compute the nearest codeword in a BSC. For a linear code of length n and rate R, the algorithm executes in time of order 2n(1−R)/2. For codes with distance d increasing linearly with length, we propose an algorithm capable of correcting [(d-1)/2]+const errors which involves a linearly increasing number of attempts to correct [(d-1)/2] errors.
It is worth noting that a similar algorithm  to BMA, called the sort and match algorithm, was proposed by Dumer in 1991 in \cite{dumer_SMA,dumer_SMA_punc} which has the same asymptotic complexity as BMA. 
%The OSD algorithm permutes the columns of the generator matrix with respect to the reliability of the symbols for every received vector and performs elementary row operations on the independent columns extracted from the permuted generator matrix resulting in the systematic form.  %Then, the hard decision codeword plus an error sequence conalgorithmased on the reliability of $\mathbf{r}$ is obtained and modulated to get $\hat{\mathbf{m}}$. The testing error patterns can have a Hamming weight of up to $l, 0\leq l \leq K$ in $l$-order OSD. 

In 2018, Duffy et al. \cite{duffy18guess} suggested a hard-decision scheme in which the error patterns were ordered from most likely to least likely ones based on a statistical channel model, and then the incorrect error patterns were sequentially removed, querying for the first error patterns corresponding to a valid codeword. This original idea, which was later called \emph{guessing random additive noise decoding} (GRAND), further developed into a soft decision scheme or sGRAND where the error patterns were generated based on the symbols' reliability  and sequential insertion and removal of the error patterns from an ordered stack until the first valid codeword was found. The sGRAND was shown to be capacity achieving \cite{duffy-tit} and ML algorithm \cite{duffy-sgrand} though it came at a significant computational complexity cost because error patterns in the stack needed to be sorted after insertion of new patterns into the stack. The approach used in GRAND appears to be in line with a general optimum technique proposed in \cite{valembois01weightFunc} to handle pattern generation while maintaining monotonicity \cite{fossorier_grand}. Recently, a reduced complexity variant of sGRAND was proposed in \cite{zheng-cgd} that limits the scope of guessing by breaking down the error coordinates into the parity and information coordinates. 
The next evolution in this approach occurred by employing a simple metric that gave error patterns for testing in a near ML order \cite{duffy-orbgrand}. This step was a significant boost for GRAND toward making it practical for high rate and short codes.  The approximate scheduling of the error sequences is based on a distinct integer partitioning of positive integers, which is significantly less complex. Alternatively, a sequential algorithmic method to generate error sequences was suggested based on partial ordering and a modified logistic weight in \cite{condo-grand} that prioritizes the low-weight error sequences resulting in improving the performance though its pattern generation process is not as simple as the integer partitioning process. In \cite{liu}, it was shown that ORBGRAND is almost capacity-achieving and a slight improvement in the block error rate, in particular at high SNR regimes, was demonstrated based on an information-theoretic study. It is worth noting that there also exists a variant of GRAND that provides block-wise soft output to control the decoding misdetection rate and bitwise soft output for efficient iterative decoding \cite{duffy2024soft,yuan-sogrand}. 
Several hardware architectures have also been proposed for ORBGRAND in \cite{abbas-orbgrand,abbas-listgrand,riaz,condo-lutgrand}.

%ed a different scheme which was later called \emph{guessing random additive noise decoding} (GRAND) \cite{duffy-tit} suggests generating the error sequences in descending order of likelihood that does not require any matrix manipulation for every received vector. %, however its simplicity does not leave much room for improvement. 
%The likelihood ordering of error sequences is determined accurately in soft-GRAND \cite{duffy-sgrand} and approximately  in \emph{ordered reliability bits} GRAND (ORBGRAND) \cite{duffy-orbgrand,abbas-orbgrand,riaz}. The accurate ordering requires sorting a large number of metrics while the approximate scheduling of the error sequences is based on distinct integer partitioning of positive integers which is significantly less complex. Alternatively, a sequential algorithmic method to generate error sequences was suggested based on partial ordering and a modified logistic weight in \cite{condo-grand} that prioritizes the low-weight error sequences.
%The interest for near maximum likelihood decoding for short codes is growing recently and gaining momentum. Revisiting short Reed-Muller codes and designing high performance short polar codes and their variants are some examples fodeterminede structure-based decoding. A new simple ML decoding called {\em guessing random additive noise decoding} (GRAND) is an example for reliability-based decoding that was introduced recently in \cite{}. 

%[Cite some historical works on ML decoding, then OSD, and Match and box techniques]
%[Review GRAND variants]
The main advantage of ORBGRAND is its simplicity in generating error patterns in an order near ML by a simple weight function that makes it a hardware-friendly algorithm. Unlike some of the other schemes, it does not require any preprocessing, or sorting (except for the reliability order) and it has inherently an early termination mechanism in itself that stops searching after finding the most likely codeword or near that.  However, the number of invalid error patterns is significantly high. The aim of this work and our previous work in \cite{rowshan-const_GRAND} was to reduce invalid patterns and save computations and time. In constrained GRAND \cite{rowshan-const_GRAND}, by simply utilizing the structure of a binary linear code, %specifically based on the insight provided by the syndrome in the GRAND's queries, 
we proposed an efficient pre-evaluation that constrains the error pattern generation. This approach could save the codebook checking operation. %which seems to be computationally more complex than the pre-processing for validation given it is performed progressively and with operations on limited bits, instead of long rows of matrices. %and discard many of invalid error sequences in the initial stage. 
These syndrome-based constraints are extracted from the parity check matrix (with or without matrix manipulation) of the underlying code. %The proposed approach breaks down the unconstrained search problem into the problem of dividing the scope of search into disjoint set(s) and determining the number of errors (even or odd) in each set. 
We also showed that the size of the search space is deterministically reduced by a factor of $2^p$ where $p$ is the number of constraints. %The numerical results show a similar reduction in the average number of queries.  
Note that constrained error sequence generation does not degrade the error correction performance as it is just discarding the error sequences that do not result in valid codewords. The proposed approach could be applied to other GRAND variants such as SGRAND \cite{duffy-sgrand}. %and GRAND-MO \cite{an-grand-mo}.%and any other reliability-based decoding where all the bits are within the scope of guessing the random noise.

In this paper, different from \cite{rowshan-const_GRAND}, we propose an approach that generates sub-patterns for the segments corresponding to the defined constraints. %Depending on the availability of insight on the weight of the error sub-pattern from the constraint, we generate sub-pattern with weights odd, even, or both for the segments. 
We simultaneously generate sub-patterns for each segment with odd or even weight, guided by the available information from the syndrome, otherwise with both weights. 
To address the challenging problem of combining the sub-patterns in an ML-order, we propose a customized partition (a.k.a composition  \cite{heubach}) of the logistic weight into segment-specific sub-weights. This composition involves partitioning the logistic weight into non-distinct positive integers, with the number of parts (a.k.a composition order) restricted to the number of segments. Furthermore, our approach allows zero to be included as an element in the composition. 
%The challenging problem of combining the sub-patterns in an order close to ML-order is tackled by a customized \emph{composition} of the logistic weight into sub-weights associated with the segments. The composition is a partition of logistic weight with non-distinct positive integers. However, the customization limits the number of integers to the number of segments and allows zero as an element of the composition. 
%Then, we can employ multiple error pattern generators based on the reliability of the received symbols to generate sub-patterns simultaneously and combine them efficiently for codebook membership checking. 
The numerical results show that by employing the proposed method, the average number of attempts is significantly reduced compared to ORBGRAND by at least 66\% depending on the number of segments. This reduction is justified by the reduction in the size of the search space. Furthermore, this approach can improve the block error rate when employing segmented ORBGRAND with abandonment (by more than 0.2 dB power gain when the abandonment threshold is $b=10^{5}$) as segmented ORBGRAND effectively increases the abandonment threshold $b$. However, this gain diminishes as the abandonment threshold increases.

\section{PRELIMINARIES}\label{sec:prelim}
%\subsection{Notations}
We denote by $\ft$ the binary finite field with two elements. The cardinality of a set is denoted by $|\cdot|$.  The interval $[a,b]$ represents the set of all integer numbers in $\{x:a\leq x\leq b\}$. The \emph{support} of a vector $\be = (e_1,\ldots,e_{n}) \in \ft^n$ is the set of indices where $\be$ has a nonzero coordinate, i.e. $\supp(\be) \triangleq \{i \in [1,n] \colon e_i \neq 0\}$ . %The indices of the vector elements may start from 0 or 1. 
The \emph{weight} of a vector $\be \in \ft^n$ is $w(\be)\triangleq |\supp(\be)|$. The all-one vector $\bone$ and all-zero vector~ $\bzero$ are defined as vectors with all identical elements of 1 or 0, respectively. The summation in $\ft$ is denoted by $\oplus$. The modulo operation (to obtain the remainder of a division) is denoted by $\%$. 
\subsection{ML Decoding and Ordered Reliability Bits GRAND}
A binary code $\mathcal{C}$ of length $n$ and dimension $k$ maps a message of $k$ bits to a codeword $\mathbf{c}$ of $n$ bits to be transmitted over a noisy channel. We assume that we are using binary phase shift keying (BPSK) modulation. The channel alters the transmitted codeword so that the receiver obtains an $n$-symbol vector~$\mathbf{r}$. A ML decoder supposedly compares $\mathbf{r}$ with all the $2^k$ modulated codewords in the codebook, and selects the one closest to $\mathbf{r}$. In other words, the ML decoder finds a modulated codeword $\textup{x}(\bc)$ such that
\begin{equation}\label{eq:likelihood}
    \hat{\mathbf{c}} = \underset{\mathbf{c}\in\mathcal{C}}{\text{arg max }} p\big(\mathbf{r}|\textup{x}(\mathbf{c})\big).
\end{equation}

For additive white Gaussian noise (AWGN) channel with noise power of $\sigma^2_n=N_0/2$ where $N_{0}$ is the noise spectral density, the conditional probability  $p\big(\mathbf{r}|\textup{x}(\mathbf{c})\big)$ is given by 
\begin{equation}\label{eq:cond_prob}
    p\big(\mathbf{r}|\textup{x}(\mathbf{c})\big)= \frac{1}{(\sqrt{\pi N_0})^n}\text{exp}\Bigg(-\sum_{i=1}^n(r_i-\textup{x}(c_i))^2/N_0\Bigg).
\end{equation}
Observe that maximizing $p(\mathbf{r}|\textup{x}(\mathbf{c}))$ is equivalent to minimizing 
\begin{equation}\label{eq:sq_euclid_dist}
d^2_E=\sum_{i=1}^n(r_i-\textup{x}(c_i))^2,
\end{equation}
which is called {\em squared Euclidean distance} (SED). Therefore, we have
\begin{equation}\label{eq:ML_dec}
    \hat{\mathbf{c}} = \underset{\mathbf{c}\in\mathcal{C}}{\text{arg max }} p\big(\mathbf{r}|\textup{x}(\mathbf{c})\big)=\underset{\mathbf{c}\in\mathcal{C}}{\text{arg min }}\big(\mathbf{r}-\textup{x}(\mathbf{c})\big)^2.
\end{equation}

The process of finding $\mathbf{c}$, depending on the scheme we employ, may require checking possibly a large number of binary error sequences $\bhe$ to select the one that satisfies 
\begin{equation}\label{eq:check}
     \mathbf{H} \cdot (\theta(\mathbf{r})\oplus\hat{\mathbf{e}}) = \mathbf{0}
\end{equation}
where $\theta(\mathbf{r})$ returns the hard-decision demodulation of the received vector $\mathbf{r}$ and $\bH$ is the parity check matrix of code $\cC$, 
$
    \bH = [
        \bh_1\,
        \bh_2\,
        \cdots\,
        \bh_{n-k}
    ]^T
$
and the $n$-element row vectors $\bh_j$ for $j\in[1,n-k]$ are denoted by $\bh_j = [h_{j,1} \; h_{j,2}  \;\cdots \;h_{j,n}]$. Note that any valid codeword $\bc=\theta(\mathbf{r})\oplus\hat{\mathbf{e}}$  gives $\bH\cdot\bc=\bzero$. Here, $\bhe$ is the binary error sequence to which we refer as an error pattern. %in the rest of the paper.

To obtain the error patterns in ML order, one can 1) generate all possible error patterns $\bhe$, that is, $\sum_{j=1}^{n}{n \choose j}$ patterns, 2) sort them based on a likelihood measure such as the squared Euclidean distance $\big(\mathbf{r}-\textup{x}\big(\theta(\mathbf{r})+\bhe\big)\big)^2$, and then 3) check them using \eqref{eq:check} one by one from the smallest distance in ascending order. It was numerically shown in \cite{duffy-orbgrand} that 
the error patterns generated by %the permutation $\pi$ of 
all the integer partitions of {\em logistic weights} $w_L=1,2,\dots, n(n+1)/2$ can give an order close to what we describe earlier. Obviously, the latter method, which is used in ORBGRAND, is more attractive, as it does not need any sorting operation on a large set of metrics at every decoding step. Note that in conventional ML decoding, the test error patterns are not checked in ML order, and thus no sorting is required. The scenario described above was a hypothetical and impractical scenario that has been made possible using logistic weight.

The logistic weight $w_L$ of a length-$n$ binary vector $\bz$ is defined as \cite{duffy-orbgrand} 
\begin{equation}\label{eq:w_L}
    w_L(\bz)=\sum_{i=1}^n z_i\cdot i
\end{equation}
where $z_i\in\ft$ is the $i$-th element of the error pattern $\hat{\mathbf{e}}$ permuted in the ascending order of the received symbols' reliability %which is proportional to their magnitudes 
$|r_i|,i\in[1,n]$. That is, the error pattern is $\bhe=\pi(\mathbf{z})$ where $\pi(\cdot)$ is the vector-wise permutation function that maps the binary vector $\bz$ to the error pattern $\bhe$. For the element-wise mapping of this permutation,  we will use $\dot{\pi}(\cdot)$ for mapping the index of any element in $\bz$ to the corresponding element in $\hat{\mathbf{e}}$ and $\dot{\pi}^{-1}(\cdot)$ for the reverse mapping. For the sake of simplicity, we refer to $w_L(\bz)$ by $w_L$.

To obtain all binary vectors $\bz$ corresponding to a certain $w_L$, there is a simple approach. All coordinates $j$ in $\bz$, where $z_j=1$ for a certain $w_L$, can be obtained from {\em integer partitions} of $w_L$ with distinct parts and no parts larger than the code length$n$. 
Let us define the integer partitions of $w_L$ mathematically as follows:
\begin{definition}\label{def:int_part}
The integer partitions of $w_L$ form the set of subsets $\cI\subset[1,w_L]$ such that 
\begin{equation}
    w_L=\sum_{j\in\cI\subset[1,w_L]}j. 
\end{equation}
Then, the binary vector $\bz$ corresponding to any $\cI$ consists of the elements $z_j=1,j\in\cI$ and $z_j=0,j\not\in\cI$. 
\end{definition}
In Definition \ref{def:int_part}, we abused the notion of integer partitions and considered a single part/partition as well to cover all the error patterns obtained from every $w_L$. Observe that for every $w_L$, there exists at least one $\cI$ with a single element $w_L$.  For instance, for $w_L=1,2$, we have a single $\cI=\{w_L\}$. As $w_L$ gets larger, the number of subsets $\cI\subset[1,w_L]$ increases. 

\begin{example}\label{ex:permut}
    Suppose we have the received sequence $\br=[0.5,-1.2,0.8,1.8,-1,-0.2,0.7,-0.9]$. 
    
    We can get the following permutation based on $|r_i|,i\in[1,8]$ in ascending order:
    \[
        \dot{\pi} : [1,2,3,4,5,6,7,8]\rightarrow[6,1,7,3,8,5,2,4]
    \]
    Assuming we have attempted all the error patterns generated based on $w_L=1,2,3,4,5$ so far. Then, we need to find the error patterns based on $w_L=6$. The integer partitions of $w_L=6$ are $\cI=\{6\},\{1,5\},\{2,4\}$, and $\{1,2,3\}$ that satisfy $w_L=\sum_{j\in\cI}j,\cI\subset[1,6]$. We call every element in $\cI$ as a \emph{part}. Then, the $\bz$ vector and the corresponding error patterns after the vector permutation $\pi$ are
    $$\bz=[0\;0\;0\,0\;0\;1\;0\;0]\rightarrow \bhe=[0\;0\;0\;0\;1\;0\;0\;0],$$
    $$\bz=[1\;0\;0\;0\;1\;0\;0\;0]\rightarrow \bhe=[0\;0\;0\;0\;0\;1\;0\;1],$$
    $$\bz=[0\;1\;0\;1\;0\;0\;0\;0]\rightarrow \bhe=[1\;0\;1\;0\;0\;0\;0\;0],$$
    $$\bz=[1\;1\;1\;0\;0\;0\;0\;0]\rightarrow \bhe=[1\;0\;0\;0\;0\;1\;1\;0].$$
    These error patterns can be checked using \eqref{eq:check} in an arbitrary order. In the next section, we will see that any of these error patterns results in an identical increase in $d_E^2$ in \eqref{eq:sq_euclid_dist}, i.e., they are all located at an identical distance from the received sequence, under some assumption about the distribution of $|r_i|,i\in[1,n]$.
\end{example}
% \begin{example}\label{ex:int_parts}
%     Suppose we have the received sequence $\br=[0.5,-1.2,0.8,1.8,-1,-0.2,0.7,-0.9]$. We can get the following permutation based on $|r_i|,i\in[1,8]$ in ascending order:
%     \[
%         \dot{\pi} : [1,2,3,4,5,6,7,8]\rightarrow[6,1,7,3,8,5,2,4]
%     \]
%     Now, assuming we have attempted all the error patterns generated based on $w_L=1,2,3,4,5$ so far, we need to find the error patterns based on $w_L=6$. The integer partitions of $w_L=6$ are $\cI=\{6\},\{1,5\},\{2,4\}$, and $\{1,2,3\}$ that satisfy $w_L=\sum_{j\in\cI}j,\cI\subset[1,6]$. We call every element in $\cI$ as a \emph{part}. Then, the $\bz$ vector and the corresponding error patterns after the vector permutation $\pi$ are
%     $$\bz=[0\;0\;0\,0\;0\;1\;0\;0]\rightarrow \bhe=[0\;0\;0\;0\;1\;0\;0\;0],$$
%     $$\bz=[1\;0\;0\;0\;1\;0\;0\;0]\rightarrow \bhe=[0\;0\;0\;0\;0\;1\;0\;1],$$
%     $$\bz=[0\;1\;0\;1\;0\;0\;0\;0]\rightarrow \bhe=[1\;0\;1\;0\;0\;0\;0\;0],$$
%     $$\bz=[1\;1\;1\;0\;0\;0\;0\;0]\rightarrow \bhe=[1\;0\;0\;0\;0\;1\;1\;0].$$
%     These error patterns can be checked using \eqref{eq:check} in an arbitrary order. In the next section, we will see that any of these error patterns results in an identical increase in $d_E^2$ in \eqref{eq:sq_euclid_dist}, i.e., they are all located at an identical distance from the received sequence, under some assumption about the distribution of $|r_i|,i\in[1,n]$.
% \end{example}

\begin{remark}
    By statistically analyzing the reliability of the received sequence or any other insight, one can prioritize small Hamming weights over those with large Hamming weights, or vice versa. Alternatively, we can limit the scope of the attempts to small or large Hamming weights.  Observe that as the logistic weight increases, error patterns with larger Hamming weights will be generated.
\end{remark}

\subsection{Ordered Statistics Decoding (OSD)}\label{ssec:OSD}
Ordered statistics decoding with order $i$ provides an efficient method to find the best possible codeword $\bhc$ in \eqref{eq:ML_dec} after querying among $\sum_{l=0}^i{k\choose l}$ candidate codewords. In OSD, the columns of the generator matrix $\bG$ are first sorted in descending order of reliability based on the received symbols $\br$. This results in the permutation function $\lambda_1(\cdot)$, which rearranges the generator matrix to yield $\bG'=\lambda_1(\bG)$. Next, starting from the first column of $\bG'$, finding $k$ linearly independent columns gives the second permutation function $\lambda_2(\cdot)$, leading to the transformation $\bG''=\lambda_2(\lambda_1(\bG))$ and $\mathbf{y}=\lambda_2(\lambda_1(\br))$. We then convert the generator matrix $\bG''$ to systematic form, $\bG''_{sys}$. 
Now, we proceed with the OSD process by generating candidate codewords as $\bv=(\theta(y_1^k)\oplus z_1^k)\bG''_{sys}$ where $z_1^k$ is a double permuted error vector with $\operatorname{w}(z_1^k)=l$.

Furthermore, we apply the early termination criterion based on the sufficient condition for optimality \cite{lin} where we stop the reprocessing of the order-$i$ OSD at order $l\leq i$ and declare the closest/best found codeword $\mathbf{v}_{\text{best}}$ given that correlation discrepancy metric $\lambda\left(\mathbf{y}, \mathbf{v}_{\text{best}}\right)=\sum_{i \in D_1(\mathbf{v}_{\text{best}})}\left|y_i\right|$ is smaller than the bound on optimality $G$ at the end of reprocessing order-$l$, that is, $\lambda\left(\mathbf{y}, \mathbf{v}_{\text{best}}\right)\leq G\left(\mathbf{v}_{\text {best}}, d_{\min}\right)$
where 
\begin{equation}
    G\left(\mathbf{v}_{\text {best}}, d_{\min}\right)=\sum_{j=0}^{l}\left|y_{k-j}\right| + \sum_{j \in D_0^{(\delta^{\prime})}\left(\mathbf{v}_{\text {best }}\right)}\left|y_j\right|,
\end{equation}
\begin{equation}
\delta^{\prime}=\max \left\{0, d_{\text {min }}-\left|D_1\left(\mathbf{v}_{\text {best}}\right)\right|-(l+1)\right\},
\end{equation}
\begin{equation}
D_1(\mathbf{v}) \triangleq\left\{i: v_i \neq z_i, 1 \leq i\leq n\right\},
\end{equation}
\begin{equation}
D_0(\mathbf{v}) \triangleq [1,n] \backslash D_1(\mathbf{v}),
\end{equation}
and $D_0^{(j)}\left(\mathbf{v}\right)$ gives the set of first $j$ indices in the set $D_0(\mathbf{v})$ that is sorted based on reliability in ascending order.

%%%%%%%%%%%%%%%%%%%%%%%%%%%%%%%%%%%%%%%%%%%%%%%%%%%%%%%%%%%%%%
\section{Near-ML Ordering of Error Patterns with Logistic Weight}
In this section, 
%We reviewed how ORBGRAND works in the previous section but we do not know why the suggested error pattern generation in ORBGRAND can perform near the ML decoding. The aim of this section is to shed light on this matter in order to devise a similar approach for combining the sub-patterns in the segmented GRAND in Section \ref{sec:comb_patterns}.
we investigate analytically how the error patterns in the ascending order of the logistic weight can closely follow the maximum likelihood order over the AWGN channel. The analysis is based on an assumption made for ORBGRAND \cite{duffy-orbgrand2}, which is in disagreement with the Gaussian distribution in the AWGN channel. This assumption is also a basis for devising a similar approach for combining the sub-patterns in the segmented ORBGRAND in Section \ref{sec:comb_patterns}. 
%To get the error patterns in ML order, theoretically, one can 1) generate all possible error patterns $\bhe$, that is, $\sum_{j=1}^{n}{n \choose j}$ patterns, 2) sort them based on a likelihood measure such as the squared Euclidean distance $(\mathbf{r}-\textup{x}\big(\theta(\mathbf{r})+\bhe\big))^2$, and then 3) check them by \ref{eq:check} one by one from the smallest distance in ascending order. It was numerically shown in \cite{duffy-orbgrand} that the error patterns generated by the permutation $\pi$ of all the integer partitions of logistic weights of $w_L=1,2,\dots, n(n+1)/2$ can give an order close to what we describe earlier. Obviously, the latter method is more attractive as it does not need any sorting operation over a large set of metrics. Here, we investigate the reason for this closeness and the relationship between these two approaches. 

%To simplify the problem, we make an assumption. You may find this assumption .  However, we will discover later in this section that this is actually an assumption that ORBGRAND is based on.%, although it has not been analyzed in the literature, except for some numerical comparison in \cite{duffy-orbgrand}.
\begin{assumption}\label{assump:equidistant}
We assume that the ordered sequence of $|r_i|,i=1,2,\dots,n$ as 
$$|r_1|\leq|r_2|\leq|r_3|\leq\cdots$$ 
are placed equidistantly. That is, $$\delta=|r_{i+1}|-|r_i|=|r_{i+2}|-|r_{i+1}|=\cdots{\color{blue}, \text{ and } \delta>0}.$$ 
Additionally, for some $\rho\geq0$, we define 
$$|r_i|=\rho+i\cdot\delta.$$
\end{assumption}
Now, let us get back to the Euclidean distance. The squared Euclidean distance (SED) as a function of $\bz$ denoted by $d^2_E(\bz)$ is 
\begin{equation}\label{eq:euclid_dist_e}
d^2_E(\bz)=\sum_{i=1}^{n}(r_i-\textup{x}(\theta(r_i)\oplus z_i))^2. 
\end{equation}
and for $\bz=\bzero$, we have 
$$d^2_E(\bzero)=\sum_{i=1}^{n}(r_i-\textup{x}(\theta(r_i)))^2$$ 
which is the minimum SED that we can get. Hence,  
$$d^2_E(\bz)>d^2_E(\bzero)$$ 
and the increase of $d^2_E(\bz)$ compared to $d^2_E(\bzero)$, denoted by $d^{(+)}$, is formulated as
\begin{equation}\label{eq:dE0_add}
d^2_E(\bz) = d^2_E(\bzero) + d^{(+)}(\bz)
\end{equation}
for any $\bz\neq \bzero$. For the sake of simplicity, we refer to $d^{(+)}(\bz)$ by $d^{(+)}$. %For the sake of simplicity, we drop use $d^{(+)}$ for the increase.  %where $\textup{x}(\theta(r_i))\in\{1,-1\}$.

Observe that $\textup{x}(\theta(r_i))=\sgn(r_i)$ and when we apply $z_i=1$, the sign changes as follows
\begin{equation}\label{eq:sgn_change}
    \textup{x}(\theta(r_i)\oplus z_i) =
    \begin{dcases}
        \sgn(r_i) & z_i=0,\\
        -\sgn(r_i) & z_i=1.\\
    \end{dcases}
\end{equation}
%Since $\sgn(r_i)\in\{1,-1\}$ is a bipolar mapping, 
%to cancel out the constant $\rho$ from the terms $(r_i-\textup{x}(\theta(r_i)\oplus z_i))^2$ in \eqref{eq:euclid_dist_e} where $|r_i|=i\cdot\delta$ for $\rho=0$, we define $$\gamma-\rho=1.$$ 
%Observe that when the sign changes in \eqref{eq:sgn_change}, we would have $-\gamma=-1-\rho$.

Without loss of generality, we assume $r_i>0$ hence $r_i=i\cdot\delta$ for $\rho=0$ any $i$ with $z_i=1$ to make the following discussion easier to follow. Since $\sgn(r_i)\in\{1,-1\}$ is a bipolar mapping, then, we have
\begin{equation}
    (r_i-\textup{x}(\theta(r_i)\oplus z_i))^2=(i\cdot\delta - 1)^2
\end{equation}
To begin with, we consider only the error patterns with a single error. 
For a pattern $\bz$ with $w(\bz)=1$, we flip $z_i=0$ to $z_i=1$ and we get $(i\cdot\delta + 1)^2$. Then, the increase in the SED is
\begin{equation}\label{eq:d_plus_e1}
d^{(+)} = (i\cdot\delta + 1)^2-(i\cdot\delta - 1)^2 = i(4\delta) = i\Delta.
\end{equation}
where the notation $\Delta=4\delta$ is introduced and it will be used in the rest of this section. 

Now, let us take all the error patterns with identical logistic weights. As we know, these patterns can be obtained by integer partitioning with distinct parts. The following proposition discusses the increase in the SED for this case, where the logistic $w_L$ is found proportional to the increase in distance from the received sequence, $d^{(+)}=d^2_E(\bz)-d^2_E(\bzero)$. In other words, given Assumption \ref{assump:equidistant}, our aim is to show that
\begin{equation}\label{eq:propto_wL}
d^{(+)} \propto w_L.
\end{equation}

\begin{proposition}
Given an arbitrary logistic weight $w_L>0$ and Assumption~\ref{assump:equidistant}, the increase in the squared Euclidean distance, i.e., the term $d^{(+)}(\bz)$ in $d^2_E(\bz) = d^2_E(\bzero) + d^{(+)}(\bz)$, remains constant for all binary vector $\bz$  with $z_j=1,j\in\cI\subset[1,w_L]$ such that $w_L=\sum_{j\in\cI}j$. That is, for some $\Delta>0$, we have
\begin{equation}\label{eq:wL_dist}
d^{(+)} = \big(\sum_{j\in\cI}j\big)\Delta \;\; \text{for all $\cI\subset[1,w_L]$ s.t.}\;\; w_L=\sum_{j\in\cI}j.
\end{equation}

\end{proposition}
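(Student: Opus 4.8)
The plan is to exploit the fact that the squared Euclidean distance in \eqref{eq:euclid_dist_e} is additively separable across coordinates, so that flipping several distinct bits produces an increase equal to the sum of the single-bit increases already computed in \eqref{eq:d_plus_e1}. First I would split the SED according to whether $z_i=0$ or $z_i=1$, writing
\begin{equation*}
d^2_E(\bz)=\sum_{i:\,z_i=0}\big(r_i-\sgn(r_i)\big)^2+\sum_{i:\,z_i=1}\big(r_i+\sgn(r_i)\big)^2,
\end{equation*}
where \eqref{eq:sgn_change} identifies the flipped modulated symbol as $-\sgn(r_i)$ on the support of $\bz$.

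Next I would subtract $d^2_E(\bzero)=\sum_{i=1}^n\big(r_i-\sgn(r_i)\big)^2$. The coordinates with $z_i=0$ cancel identically, leaving a sum over the support $\cI=\supp(\bz)$ only:
\begin{equation*}
d^{(+)}(\bz)=\sum_{i\in\cI}\Big[\big(r_i+\sgn(r_i)\big)^2-\big(r_i-\sgn(r_i)\big)^2\Big]=\sum_{i\in\cI}4|r_i|.
\end{equation*}
This is the step that carries the argument: because the SED is separable and the parts of $\cI$ are distinct, no coordinate is flipped more than once and there are no cross terms coupling different flipped positions, so each flipped index contributes in isolation exactly the single-error increment of \eqref{eq:d_plus_e1}. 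Invoking the $\rho=0$ reduction already adopted in the text gives $|r_i|=i\delta$, whence each contribution is $4i\delta=i\Delta$ with $\Delta=4\delta$, and summing over the support yields
\begin{equation*}
d^{(+)}(\bz)=\Delta\sum_{i\in\cI}i=\Delta\, w_L,
\end{equation*}
using the defining relation $w_L=\sum_{j\in\cI}j$ of Definition~\ref{def:int_part}. Since the right-hand side depends on $\cI$ only through $w_L$ and not through the particular choice of parts, the increase is identical for every integer partition of $w_L$, which is precisely \eqref{eq:wL_dist}.

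The main obstacle I anticipate is conceptual rather than computational: one must make explicit that the additive separability of the SED is exactly what forbids any interaction between distinct flipped positions, since this is what upgrades the single-error calculation to the multi-error case. I would also flag the reliance on $\rho=0$: for $\rho>0$ each flip contributes $4(\rho+i\delta)$, introducing a term $4\rho|\cI|$ that depends on the \emph{number} of parts, so the clean proportionality $d^{(+)}\propto w_L$ holds exactly only under the $\rho=0$ specialization used here; I would state this caveat so that the equidistant assumption is not overclaimed.
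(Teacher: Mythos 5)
Your proof is correct and takes essentially the same route as the paper's: both rest on the coordinate-wise separability of the SED, so that each flipped position contributes exactly its single-error increment $4|r_i|=i\Delta$ from \eqref{eq:d_plus_e1}, and summing over the support $\cI$ gives $\Delta\sum_{i\in\cI}i=\Delta\,w_L$ independently of the particular partition. Your closing caveat is also well taken and slightly sharper than the paper's phrasing: the ``without loss of generality'' there only covers the sign of $r_i$, whereas setting $\rho=0$ is a material specialization, since for $\rho>0$ the increase picks up an extra term $4\rho\,|\cI|$ that depends on the number of parts rather than on $w_L$ alone.
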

\begin{proof}
%We first investigate two error patterns of weight $w_L(\bz)=1$ and 2. 
Suppose $w_L=i=i_1+i_2$. We first compare $d^{(+)}$ for the error patterns corresponding to $i$ alone and $i_1,i_2$ together. We observed the increase in the SED by an error pattern with $w(\bz)=1$ in \eqref{eq:d_plus_e1}. 
Now, if we use an error pattern $\bz$ with weight $w(\bz)=2$ by flipping $z_{i_1}=z_{i_2}=0$ to $z_{i_1}=z_{i_2}=1$ given $w_L(\bz)=i=i_1+i_2$, we get 
\begin{multline*}
d^{(+)} = \Big((i_1\delta + 1)^2+(i_2\delta + 1)^2\Big)-\Big((i_1\delta - 1)^2+(i_2\delta - 1)^2\Big)\\ =
\Big((i_1\delta + 1)^2-(i_1\delta - 1)^2\Big)+\Big((i_2\delta + 1)^2-(i_2\delta - 1)^2\Big)\\ \stackrel{\text{\eqref{eq:d_plus_e1}}}{=}
i_1\Delta+i_2\Delta=(i_1+i_2)\Delta
\end{multline*}

In general, if we use any error pattern $\bz$ with weight larger than $w(\bz)>1$ given $w_L(\bz)=i$, we have
\begin{equation}%\label{eq:}
d^{(+)} = \sum_{j\in\cI}\Big((j\delta + 1)^2-(j\delta - 1)^2\Big) = \big(\sum_{j\in\cI}j\big)\Delta.
\end{equation}
Therefore, as $i=\sum_{j\in\cI}j$, any error pattern $\bz$ with $z_j=1,j\in\cI$ and $\cI\subset[1,i)$ gives the same $d^{(+)}$.
Note that all $\cI$ subsets can be obtained by integer partitioning with distinct parts.
\end{proof}

Hence, test error patterns with an identical logistic weight will have the identical squared Euclidean distance as well. That is why the order of checking these patterns is arbitrary as suggested in \cite{duffy-orbgrand}. 

\begin{remark}
Given two logistic weights of $w_L=i$ and $i'$ such that $i'>i$. Since $i'\Delta>i\Delta$ and so the $d^{(+)}$ corresponding to $i'$ will be larger, we have $d^2_E(\bz)<d^2_E(\bz')$ where $\bz$ and $\bz'$ are the test error patterns corresponding to $w_L=i$ and $i'$. Hence, the test error pattern(s) with $w_L=i$ should be checked first in this case.
\end{remark}

Recall that we considered Assumption \ref{assump:equidistant} for the analysis in this section, which implies a uniform distribution for the received signals. However, this assumption is not realistic as the $r_i$ values follow the Gaussian distribution. 
Therefore, the test error patterns in the order generated based on the logistic weight may not be precisely aligned with the ML order. As a result, we refer to this order as a near-ML order.%, and the decoding process based on guessing the additive noise, i.e., ORBGRAND, is considered to be near-ML decoding. %Note that for short codes where a received sequence has a limited number of signals, the signals do not follow well the Gaussian distribution.
%%%%%%%%%%%%%%%%%%%%%%%%%%%%%%%%%%%%%%%%%%%%%%%%%%%%%%
\section{Segmented GRAND: Error Sub-patterns} \label{sec:err_subpattern}
%As discussed in Section \ref{sec:prelim}, the ORBGRAND checks the error patterns $\bhe$ using \eqref{eq:check}. 
%In this scheme, we have a single pattern generator that outputs $\bhe$ in a near-ML order. This may require generating a significant number of patterns to find a valid pattern that passes all the parity constraints in the parity check matrix $\bH$. In \cite{rowshan-const_GRAND}, we studied how to constrain this single error pattern generator to output the patterns satisfying one or multiple disjoint constraints. The aim was to avoid the computationally complex operation in \eqref{eq:check} in the pattern checking stage and replace it with a computationally simple partial pre-evaluation in the pattern generation stage. Towards this goal, we extracted multiple constraints from the original or manipulated parity check matrix such that the constraints cover disjoint sets of indices in $[1,n]$.  
In \cite{rowshan-const_GRAND}, we studied how to constrain this single test error pattern generator to output the patterns satisfying one or multiple disjoint constraints. The aim was to avoid the computationally complex operation in \eqref{eq:check} in the pattern checking stage and replace it with a computationally simple partial pre-evaluation in the pattern generation stage. Towards this goal, we extracted multiple constraints from the original or manipulated parity check matrix such that the constraints cover disjoint sets of indices in $[1,n]$.  

In this section, we use the extracted constraints in \cite{rowshan-const_GRAND} and call the corresponding disjoint sets \emph{segments}. Furthermore, we employ multiple test error pattern generators associated with the segments to generate short patterns, named \emph{sub-patterns}, satisfying the constraint corresponding to the segments. Hence, unlike in \cite{rowshan-const_GRAND}, all the generated sub-patterns and the patterns resulting from the combinations of sub-patterns will satisfy all the constraints, and we do not discard any generated patterns. However, this advantage comes with the challenging problem of how to order test error patterns resulting from the combinations of sub-patterns. We will tackle this problem in the next section. In the remainder of this section, we define the segments and notation needed for the rest of the paper. 

Depending on the parity check matrix $\bH$ of the underlying code, we can have at least two segments. Denote the total number of segments by $p$ and the set of coordinates (or indices) of the coded symbols in the segment $j$ by $\cS_j$.  Any row $\bh_j,j\in[1,n-k]$, of matrix $\bH$  can partition the block code into two segments as follows: 
$$\cS_{j}=\supp(\bh_j),\;\;\;\;\;%$$ $$
\cS_{j}^\prime=[1,n]\backslash\supp(\bh_j).$$ 
Before further discussion, let us define explicitly a segment as follows:
\begin{definition}
    {\bf Error Sub-pattern}: A subset of coordinates in the test error pattern $\bhe$ corresponding to a segment is called an error sub-pattern. In other words, the error sub-pattern corresponding to segment $j$, denoted by $\cE_j$, is defined as
    \begin{equation}
        \cE_j=\cS_j\cap\supp(\be).
    \end{equation}
    %\null\hfill\qedsymbol
    %\hspace*{\fill}\qedsymbol
\end{definition}
The syndrome can give us some insight into the number of errors in each segment.
\begin{remark}
The corresponding element $s_j$ in syndrome $\bs=[s_{1} \; s_{2} \;\cdots \;s_{n-k}]$ determines the weight of the corresponding error sub-pattern $\cE_j$ as \cite{rowshan-const_GRAND}
    \begin{equation}\label{eq:even_odd}
        |\cE_j|=|\supp(\bh_j)\cap\supp(\be)|=
        \begin{dcases}
            \text{odd}  & s_j=1,\\ 
            \text{even} & s_j=0,
        \end{dcases}
    \end{equation}
    where the even number of errors includes no errors as well. However, the weight of the error sub-pattern corresponding to positions outside $\supp(\bh_j)$, i.e., $$|\big([1,n]\backslash\supp(\bh_j)\big)\cap\supp(\be)|\rightarrow\text{unknown},$$ can be either even or odd as the positions in $[1,n]\backslash\supp(\bh_j)$ are not involved in the parity constraint $\bh_j$.
\end{remark}

% Now, we review how to form the segments according to \cite{rowshan-const_GRAND}. Observe that any row $\bh_j$ of parity check matrix $\bH$ gives us two segments as follows: 
% \begin{equation}\label{eq:sgn_change}
%     \begin{dcases}
%         \cS_1 = \{i\in[1,n]: h_{j,i}=1\},\\
%         \cS_2 = \{i\in[1,n]: h_{j,i}=0\}.\\
%     \end{dcases}
% \end{equation}
%Now, we review how to form the segments according to \cite{rowshan-const_GRAND}. 
Depending on the parity check matrix $\bH$, we may be able to cover the positions in $[1,n]\backslash\supp(\bh_j)$ by one or more other rows in $\bH$ other than row $j$. This can be achieved by matrix manipulation of $\bH$, i.e., row operation, because the row space is not affected by elementary row operations on $\bH$ (resulting in $\bH'$) as the new system of linear equations represented in the matrix form $\bH'\cdot\bc=\bzero$ will have an unchanged solution set $\cC$. 

\begin{example}\label{ex:h1_subset_h2}
Suppose we have three rows of a parity check matrix and the associated syndrome bits as follows:
$$\bh_{j_1} = [1 \; 1  \; 1  \; 1 \; 0  \; 1  \;1  \;0], \;\;s_{j_1}=0,%\;\;\;\;\;\;\;\;\;\;
$$
$$
\bh_{j_2} = [0 \; 1  \; 0  \; 1 \; 0  \; 0  \;1  \;0], \;\;s_{j_2}=1,$$
$$\bh_{j_3} = [0 \; 1  \; 0  \; 1 \; 1  \; 0  \;1  \;1], \;\;s_{j_3}=0.$$
From $\bh_{j_2}$, we can form two segments corresponding to the following disjoint index sets: 
$$\cS_{j_2}=\supp(\bh_{j_2})=\{2,4,7\},%\;\;\;\;
$$
$$
\cS_{j_2}^\prime=[1,8]\backslash\supp(\bh_{j_2})=\{1,3,5,6,8\}.$$
From $s_{j_2}=1$, we understand that 
$$|\cS_{j_2}\cap\supp(\be)|\rightarrow\text{odd},\;\;\;\;\;%$$
%$$
|\cS_{j_2}^\prime\cap\supp(\be)|\rightarrow\text{unknown}.$$
Here, unknown means the weight of error sub-pattern $\cE_{j_2}^\prime=\cS_{j_2}^\prime\cap\supp(\be)$ can be either even or odd. Hence, we have to generate all the sub-patterns, not constrained to odd or even sub-patterns only. Note that we can efficiently generate only odd or even sub-patterns as illustrated in Section \ref{sec:implement}, however in the case of no insight into the number of errors in the segment, we have to generate all possible sub-patterns for that specific segment.  %According to \cite{rowshan-const_GRAND}, if we  have some information about 

Now, by row operations on $\bh_{j_1}$ and $\bh_{j_3}$, we can get 
$$\bh_{j_1}^\prime = \bh_{j_1}\oplus\bh_{j_2} = [1 \; 0  \; 1  \; 0 \; 0  \; 1  \;0  \;0], \;\;s_{j_1}^\prime=1,$$
$$\bh_{j_2} = [0 \; 1  \; 0  \; 1 \; 0  \; 0  \;1  \;0], \;\;s_{j_2}=1,$$
$$\bh_{j_3}^\prime = \bh_{j_3}\oplus\bh_{j_2} = [0 \; 0  \; 0  \; 0 \; 1  \; 0  \;0  \;1], \;\;s_{j_3}^\prime=0,$$
where we can form three segments ($p=3$) with corresponding disjoint index sets 
$$\cS_{j_1}^\prime=\{1,3,6\},\;
\cS_{j_2}=\{2,4,7\},\;
\cS_{j_3}^\prime=\{5,8\},$$
from which we understand that the weight of error sub-patterns are as follows:
$$|\cS_{j_1}^\prime\cap\supp(\be)|\rightarrow\text{odd},\;\;
|\cS_{j_2}\cap\supp(\be)|\rightarrow\text{odd},\;\;
|\cS_{j_3}^\prime\cap\supp(\be)|\rightarrow\text{even}.$$

% As practical examples for segmentation, we can give the following codes:
% \begin{itemize}
%     \item eBCH code (128, 106): The rows $\bh_{1}$ and $\bh_{2}$ in $\bH$ matrix satisfy the relationship $\supp(\bh_{2})\subset\supp(\bh_{1})$ where $\bh_2=\bone$ and $|\bh_1|/2=|\bh_2|=64$. Hence, we can modify $\bh_1$ by row operation $\bh_1^\prime=\bh_1\oplus\bh_2$ to get the following two segments: $\cS_{2}=\supp(\bh_{2})$ and $\cS_{1}^\prime=\supp(\bh_{1}^\prime)$ where $\cS_{2}\cup\cS_{1}^\prime=[1,128]$.
%     \item PAC code (64, 44): The rows $\bh_{1}$, $\bh_{4}$, and $\bh_{5}$ in $\bH$ matrix  satisfy the relationship $\supp(\bh_{5})\subset\supp(\bh_{4})\subset\supp(\bh_{1})$ where $\bh_1=\bone$ and $|\bh_1|/2=|\bh_4|=2|\bh_5|=32$. Hence, we can modify $\bh_1$ and $\bh_4$ by row operations $\bh_1^\prime=\bh_1\oplus\bh_4$ and $\bh_4^\prime=\bh_4\oplus\bh_5$ to get the following three segments: $\cS_{1}^\prime=\supp(\bh_{1}^\prime)$, $\cS_{4}^\prime=\supp(\bh_{4}^\prime)$, and $\cS_{5}=\supp(\bh_{5})$ where $\cS_{1}^\prime\cup\cS_{4}^\prime\cup\cS_{5}=[1,64]$. 
% \end{itemize}
\end{example}

Now, we turn our focus to the possible reduction in complexity that segmentation can provide in terms of sorting complexity and membership-checking complexity.

%\begin{remark}
{\bf Complexity of Sorting the Received Signals.} In all the variants of GRAND, the received signals should be sorted in ascending order of their absolute values. Let us take a bitonic network sorter with the total number of stages computed based on the sum of the arithmetic progression as \cite[Section V]{rowshan-lva}
\begin{align}\label{eq:num_stages}
    \Psi = \sum _{\psi =1}^{\log _2 n} \psi =\frac{1}{2}(\log _2 n)(1+\log _2 n). 
\end{align}
Observe that the reduction in $n$ can significantly reduce $\Psi$ as a measure of latency in a parallel implementation.  For example, given a code with length $n=64$ with $\Psi=21$. If it is segmented into two equal segments, then we get $\Psi=15$. %Note that the total complexity in terms of big-$O$ notation is $O(\log _2^2 n)$. Clearly, by segmentation, $n$ reduces and so the time complexity (the number of stages in the sorter network).
Note that the total number of stages in \eqref{eq:num_stages} as a measure of time complexity (assuming that all nodes in every stage are processed simultaneously) is in the order of $O(\log _2^2 n)$. Clearly, by segmentation, $n$ reduces, and so does the time complexity.
%\end{remark}
Furthermore, assuming the use of the merge sort or quick sort algorithm with the complexity of $O\left(n\log_2 n\right)$ comparisons, we can similarly observe that reducing $n$ would significantly reduce the number of operations. For example, defining two equal segments can reduce $n\log_2 n$ to $2\times\left(\frac{n}{2}\log_2 \frac{n}{2}\right)=n\log_2 \frac{n}{2}$.
%and hence the overall complexity is assumed $6n\log_2 n$.}

{\bf Average Number of Queries.}  As the reduction in the number of queries depends on the number of parity constraints, let us first see how many segments we can have.
\begin{remark}
    The maximum number of segments depends on the underlying code. However, the minimum number of segments is two as was shown in Example \ref{ex:h1_subset_h2} by considering either one or two parity check constraints. The latter gives a lower complexity because we get insight into both segments. Codes that have a well-structured parity check matrix, such as polar codes, can easily form more than two segments. 
\end{remark}
The reduction in the average complexity is also proportional to the reduction in the size of the search space, as was shown numerically in \cite{rowshan-const_GRAND}. The following lemma shows that the size of the search space reduces by a factor of two, and it depends on the total number of parity constraints. 
\begin{lemma}\label{lma:seach_space_size}
Suppose we have a parity check matrix $\bH$ in which there are $p$ rows of $\bh_{j},j=j_1,j_2,...,j_p$ with mutually disjoint index sets $\cS_j = \supp(\bh_{j})$ that define $p$ segments, then the size of the search space by these $p$ parity check equations is
    \begin{equation}\label{eq:search_space_size_less_than}
        \Omega(\bh_{j_1},..,\bh_{j_p})  = 2^{n-p}.
    \end{equation}
\end{lemma}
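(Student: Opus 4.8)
The plan is to count the number of length-$n$ binary error patterns $\be$ that simultaneously satisfy all $p$ parity constraints $\bh_{j_1}\cdot\be = s_{j_1}, \ldots, \bh_{j_p}\cdot\be = s_{j_p}$ (over $\ft$), and to show that the disjointness of the supports $\cS_{j_1},\ldots,\cS_{j_p}$ is exactly what makes these constraints independent, so that each one halves the search space.

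First I would set up the search space precisely. Without any constraint, the set of candidate error patterns $\be\in\ft^n$ has size $2^n$. Each segment $j$ imposes one linear equation over $\ft$, namely $\bh_j\cdot\be = s_j$, which by the even/odd characterization in \eqref{eq:even_odd} fixes the parity of $|\cS_j\cap\supp(\be)|$. The claim is that imposing all $p$ of these parity equations simultaneously cuts the count by a factor $2^p$, yielding $2^{n-p}$.

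The key step is to argue that the $p$ linear forms $\bh_{j_1},\ldots,\bh_{j_p}$ are linearly independent over $\ft$, and in fact that the map $\be\mapsto(\bh_{j_1}\cdot\be,\ldots,\bh_{j_p}\cdot\be)$ is surjective onto $\ft^p$. Here disjointness does the work: since the supports $\cS_{j_1},\ldots,\cS_{j_p}$ are mutually disjoint, any coordinate index $i$ lies in at most one $\cS_j$. Hence each constraint $\bh_j\cdot\be=s_j$ involves a block of coordinates that no other constraint touches, so the constraints decouple completely — choosing the parity of $\be$ restricted to each block is an independent binary choice. Concretely, I would show the evaluation map is surjective by exhibiting, for each $j$, a vector supported inside $\cS_j$ (for instance a single coordinate $i\in\cS_j$, noting $\cS_j\neq\emptyset$ since $\bh_j\neq\bzero$) that triggers exactly the $j$-th parity and no other. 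This gives $p$ linearly independent images, so the map has rank $p$ and its kernel-coset (the solution set for a fixed syndrome vector) has size $2^{n}/2^{p}=2^{n-p}$.

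**The hard part** — or rather the one place where care is needed — is making the counting statement apply to the \emph{affine} system $\bh_j\cdot\be=s_j$ rather than the homogeneous one. The solution set of an inhomogeneous linear system over $\ft$ is either empty or a coset of the solution space of the homogeneous system $\bh_j\cdot\be=0$. I would note that a solution always exists (the true error pattern is one, or one can be constructed blockwise using the surjectivity argument above), so the set is a nonempty coset and thus has the same cardinality $2^{n-p}$ as the homogeneous solution space, independent of the particular syndrome values $s_{j_1},\ldots,s_{j_p}$. This confirms $\Omega(\bh_{j_1},\ldots,\bh_{j_p})=2^{n-p}$ and matches the earlier claim that each constraint deterministically reduces the search space by a factor of $2$.
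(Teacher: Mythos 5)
Your proof is correct, but it takes a genuinely different route from the paper. The paper argues by direct combinatorial counting: for a single constraint it writes the count as $\sum_{\ell \,\%\, 2 = s_j}\binom{|\cS_j|}{\ell}\cdot 2^{n-|\cS_j|}$, uses the identity that the even-weight and odd-weight subsets of a nonempty set each number $2^{|\cS_j|-1}$, and then, thanks to disjointness, factorizes the $p$-constraint count as a product $\bigl(\prod_j 2^{|\cS_j|-1}\bigr)\cdot 2^{n-\sum_j|\cS_j|}=2^{n-p}$, with the coordinates outside all segments left free. You instead run a linear-algebra argument: disjointness lets you exhibit, for each $j$, a weight-one vector supported in $\cS_j$ mapping to the $j$-th standard basis vector of $\ft^p$, so the syndrome map $\be\mapsto(\bh_{j_1}\cdot\be,\ldots,\bh_{j_p}\cdot\be)$ is surjective; rank--nullity gives a kernel of size $2^{n-p}$, and each fixed syndrome value carves out a nonempty coset of that kernel. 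Your treatment of the affine-versus-homogeneous issue (nonempty coset, hence same cardinality regardless of $s_{j_1},\ldots,s_{j_p}$) is exactly the care the paper's counting sidesteps by summing binomials directly. What your route buys is generality: it shows the $2^{n-p}$ count holds whenever the $p$ rows are linearly independent, with disjoint supports serving merely as a convenient certificate of independence. What the paper's route buys is a segment-by-segment factorization of the search space ($2^{|\cS_j|-1}$ valid sub-patterns per segment, free bits elsewhere), which mirrors the algorithmic structure of segmented GRAND, where each segment runs its own constrained sub-pattern generator.
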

\begin{proof}
    Let us first take a row $\bh_j$ and $\cS_j=\supp(\bh_j)$. In this case, we only consider the error sequences satisfying $|\cS_j\cap\supp(\bhe)|\text{ mod }2= s_j$ in the search space. Then, the size of the constrained search space will be 
    \begin{equation}\label{eq:single_const}
        \Omega(\bh_j) = \sum_{\footnotesize\substack{\ell\in[0,|\cS_j|]:\\ \ell\text{ mod } 2 = s_j}} {|\cS_j| \choose \ell} \cdot 2^{n-|\cS_j|} = \frac{2^{|\cS_j|}}{2} \cdot 2^{n-|\cS_j|} = 2^{n-1}.
    \end{equation}
    Generalizing \eqref{eq:single_const} for $p$ constraints, we have%can generalize it for $p$ constraints as follows:
    \begin{multline*}
        \Big(\prod_{j=j_1}^{j_p}\sum_{\footnotesize\substack{\ell\in[0,|\cS_j|]:\\\ell\text{ mod } 2 = s_j}} {|\cS_j| \choose \ell}\Big) \cdot 2^{n-\sum_{j=j_1}^{j_p}|\cS_j|} =\\ \big(\prod_{j=j_1}^{j_p} 2^{|\cS_j|-1}\big) \cdot 2^{n-\sum_{j=j_1}^{j_p}|\cS_j|} = 2^{n-p}.
        %\Big(\prod_{j=j_1}^{j_p}\sum_{\footnotesize\substack{\ell\in[0,|\cS_j|]:\\\ell\text{ mod } 2 = s_j}} {|\cS_j| \choose \ell}\Big) \cdot 2^{n-\sum_{j=j_1}^{j_p}|\cS_j|} = \big(\prod_{j=j_1}^{j_p} 2^{|\cS_j|-1}\big) \cdot 2^{n-\sum_{j=j_1}^{j_p}|\cS_j|} = 2^{n-p}.
    \end{multline*}\vspace{-10pt}
    
\end{proof}

So far, we have defined the segments and the corresponding error sub-patterns. In \cite{rowshan-const_GRAND}, we provided an efficient scheme shown in Fig. \ref{fig:pre_eval} to evaluate the outputs of a single test error pattern generator of the ORBGRAND with respect to the segments' constraint in \eqref{eq:even_odd} before checking the codebook membership by \eqref{eq:check}. %The drawback of the proposed method is that we still have to generate invalid patterns although we can discard them to avoid the computationally more complex operation of codebook membership checking.
%In this paper, our goal is to entirely avoid generating invalid error patterns with respect to the parity constraints of the segments. To this end, we need multiple error pattern generators that only produce valid sub-patterns simultaneously for the associated segments. Note that we can have a single pattern generator that produces sub-patterns for the segments successively at the cost of longer latency. 
However, this paper suggests using multiple error pattern generators shown in Fig. \ref{fig:sub_patterns} that only produce valid sub-patterns simultaneously for the associated segments. Hence, the pre-evaluation stage in Fig. \ref{fig:pre_eval} is no longer required. This sub-pattern-based approach is discussed in the next section in detail. %Figs. \ref{fig:pre_eval} and \ref{fig:sub_patterns} illustrate the difference between the approaches in this paper and in \cite{rowshan-const_GRAND}. As can be seen, the pre-evaluation in Fig. \ref{fig:pre_eval} can save the codebook membership checking operation which is computationally expensive. Whereas the error patterns generated based on the valid sub-patterns in Fig. \ref{fig:sub_patterns} do not need any pre-evaluation. %The impact of the latter approach in terms of scheduling is illustrated in Fig. \ref{fig:clock_comp} which compares the operations performed and the time steps required to reach the first valid error pattern assuming the second pattern is a valid one in the ORBGRAND.

\begin{figure}[ht]
    \centering
    \includegraphics[width=0.9\columnwidth]{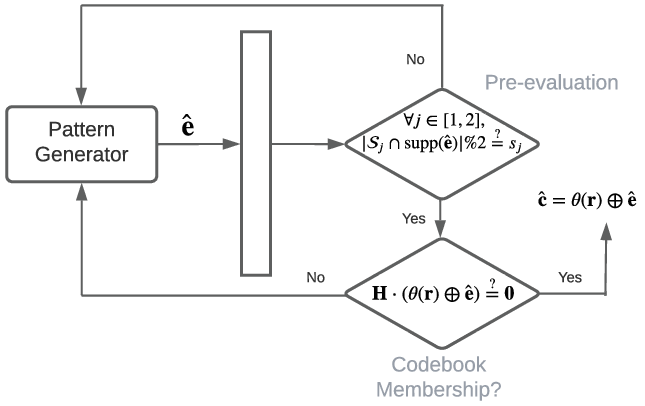}%pre_evaluation.pdf}
    \caption{The error pattern generation process with pre-evaluation based on two constraints in ``constrained GRAND" \cite{rowshan-const_GRAND}.}
    \label{fig:pre_eval}
    %\vspace{-5pt}
\end{figure}

\begin{figure}[ht]
    \centering
    \includegraphics[width=0.9\columnwidth]{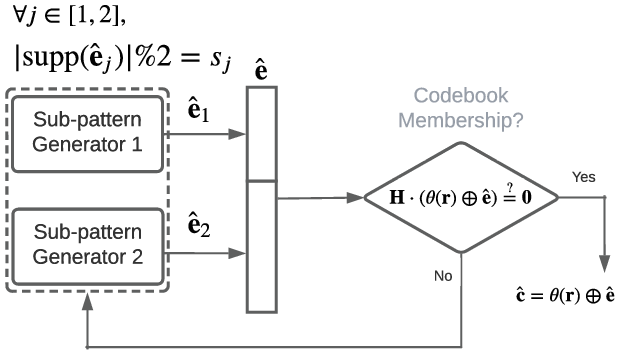}
    \caption{The proposed error pattern generation approach based on two sub-patterns in ``segmented GRAND".}
    \label{fig:sub_patterns}
    %\vspace{-5pt}
\end{figure}

% \begin{figure}[ht]
%     \centering
%     \includegraphics[width=0.6\columnwidth]{clock_comp.pdf}
%     \caption{A sketch for comparing different approaches in terms of scheduling, assuming the second generated pattern $\bhe$ is valid in the ORBGRAND: (a) the ORBGRAND, (b) ORBGRAND with pattern pre-evaluation - every pattern is processed in one time-step (or clock period), (c) ORBGRAND with pattern pre-evaluation - a two time-step approach: the time-step is adjusted for membership checking time which is presumably longer than the time required for pattern generation, and (d) the sub-pattern based approach where no invalid pattern is generated. Note that the valid pattern is defined with respect to the constraints of the segments, not all the parity constraints in $\bH$.}
%     \label{fig:clock_comp}
%     %\vspace{-5pt}
% \end{figure}

\section{Combining Sub-patterns in Near-ML Order} \label{sec:comb_patterns}
A challenging problem in handling sub-patterns is combining them in an order near the ML order. In ORBGRAND, the logistic weight $w_L$ is used as a guide to generate error patterns in a near-ML order, i.e., the logistic $w_L$ is assumed to be proportional to the increase in distance, $d^{(+)}$ from the received sequence as shown in \eqref{eq:wL_dist}. As discussed in the previous section, we eventually want to generate sub-patterns for the segments of the underlying code and then combine them. 
Every segment $j$ uses its own logistic weight $w_L^{(j)}$ to generate its sub-patterns. Hence, within each segment, the patterns are generated in a near-ML order. 
However, we do not know how to combine the sub-patterns from different segments in order to generate the entire pattern in a near-ML order. The trivial way would be generating a set of entire patterns by considering all the possible combinations of the sub-patterns (probably in batches due to the limitation of resources), computing their SEDs, and then sorting them in the ascending order of SED. This method is not of our interest because we need to store many patterns and sort them frequently, similar to what we do in soft-GRAND. 
In this section, we propose an approach based on a logistic weight $w_L$, to preserve the near-ML order in the ORBGRAND, in which we assign sub-weights $w_L^{(j)},j\in[1,p]$ to $p$ segments such that 
\begin{equation}\label{eq:sub_weights}
    w_L = \sum_{j=1}^p w_L^{(j)}.
\end{equation}
Observe that the combined sub-patterns will still have the same $w_L$ for any set of $[w_L^{(1)}\;\;w_L^{(2)}\cdots w_L^{(p)}]$ that satisfies \eqref{eq:sub_weights}. 
Note that that in order for the least reliable bits in each segment $\mathcal{S}_j$ to contribute the same amount of logistic weight to the total logistic weight $w_L$, the number of elements in a segment should be large (e.g., $n = 128$ and two segments of 64 bits each, in which case the rank ordered reliability for each segment will be similar) or the number of segment should be small (e.g., $p=2$). 
Now the question is how to get all such sub-weight vectors $[w_L^{(1)}\;\;w_L^{(2)}\cdots w_L^{(p)}]$. It turns out that by modification of integer partitioning defined in Definition \ref{def:int_part}, we can obtain all such sub-weights. %The first difference is that $w_L$ is partitioned into a fixed number of parts where the number of parts is equal to the number of segments. Hence, this integer partitioning is different from the one used in the ORBGRAND in the sense that 
The difference between the integer partitions in Definition \ref{def:int_part} and what we need for sub-weights are as follows: 
1) The integer partitions do not need to be distinct (repetition is allowed). That is, two or more segments can have identical sub-weights, 2) the permutation of partitions is allowed, 3) the number of integer partitions (a.k.a part size) is fixed and is equal to the number of segments, and 4) the integer zero is conditionally allowed, i.e., one or more partitions can take zero value given the syndrome element corresponding to the segment is $s_j=0$. 

After obtaining the sub-weights, we can use the integer partitions in Definition \ref{def:int_part} to get the sub-pattern(s). Hence, we have two levels of integer partitioning in the proposed approach. These two levels are illustrated in Fig. \ref{fig:two-llvl-part}. The rest of this section is dedicated to giving the details of this approach starting with some examples for the first level of partitioning and then some definitions and a proposition on how to get all the valid sub-weights for the segments in an efficient way. 

\begin{figure}[ht]
    \centering
    \includegraphics[width=1\columnwidth]{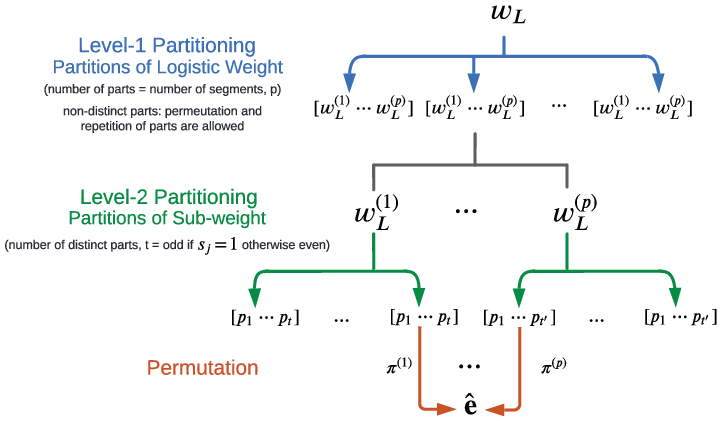}
    \caption{Two-level integer partitioning to generate error patterns for $p$ segments. Note that we have $j\in[1,p]$ and $t,t'$ are the number of parts (odd, even, or arbitrary when we don't have $s_j$ for the corresponding segment such as segment $\cS_{j_2}^\prime$ in Example \ref{ex:h1_subset_h2}).}
    \label{fig:two-llvl-part}
    %\vspace{-5pt}
\end{figure}

\begin{example}
    Suppose the current logistic weigh is $w_L=5$ and the codeword is divided into three segments, $p=3$, with the corresponding syndrome elements $s_{j_1}=0$, $s_{j_2}=1$ and $s_{j_3}=1$. That is, the weights of the sub-patterns corresponding to the segments are [even, odd, odd], respectively. To generate the sub-patterns for this $w_L$, the logistic weights of the segments by the first level of integer partitioning are chosen as 
    $$[0\;1\;4], [0\;2\;3], [0\;3\;2], [0\;4\;1], [3\;1\;1].$$
    Observe that the sum of the segment weights is 5 while there are repetitions of weights in $[3\;1\;1]$, permutation of the weights in $[0\;2\;3]$ and $[0\;3\;2]$, and zero weight for the segment with $s_{j_1}=0$ to allow considering no errors for segment $j_1$, i.e., empty sub-pattern. We will discuss later the other details of the sub-pattern generation shown in this example.
\end{example}
Note that the sub-patterns are generated based on the logistic weight of the segments provided in the example above at the second level of integer partitioning where the parts are distinct integers, in a manner employed in ORBGRAND. 

\begin{example}\label{ex:subpatterns-shared}
    Suppose we have three segments and $[w_L^{(1)}\;w_L^{(2)}\;w_L^{(3)}]=[0\;3\;5]$ for $w_L=8$. The integer partitioning of 3 and 5 with distinct parts results in $[1\;2]$ for $w_L^{(2)}=3$, and $[1\;4]$ and $[2\;3]$ for $w_L^{(3)}=5$. Therefore, there are $1\times2\times3=6$ sub-patterns as follows:
    $$[\;]+[3]+[5],\;\;\;[\;]+[1\;2]+[5],$$ $$[\;]+[3]+[1\;4],\;\;\;[\;]+[1\;2]+[1\;4],$$ $$[\;]+[3]+[2\;3],\;\;\;[\;]+[1\;2]+[2\;3].$$ 
     Observe that the sub-patterns are shared among several error patterns. This allows us to reuse the sub-patterns in generating new patterns, significantly reducing the average complexity of pattern generation. 
\end{example}
{\bf Local permutation.} The integers in the aforementioned sub-patterns refer to the relative position of the symbols in the segments, locally ordered with respect to their reliability. Hence, we need to use a local permutation $\pi^{(j)}(\cdot)$ for every segment $j$, unlike the ORBGRAND where we have only one permutation function $\dot{\pi}(\cdot)$ as discussed in Section \ref{sec:prelim}. 
    The operator ``+" denotes the concatenation of the sub-patterns. These patterns can be checked in an arbitrary order as long as they belong to the same $w_L$.
The local permutation $\dot{\pi}^{(j)}(\cdot)$ maps a local index in $[1,|\cS_j|]$ belonging to segment $j$ to the overall index in $[1,n]$ as  
\begin{equation}\label{eq:elem-perm}
    \dot{\pi}^{(j)} : \{1,2,\dots,|\cS_j|\} \rightarrow \cS_j.
\end{equation}

From Definition \ref{def:int_part}, we can define a $\bz_j$ as a binary vector of length $|\cS_j|$ in which $z_{j,i}=1$ where $i\in\cI\subset[1,|\cS_j|]$ for $w_L^{(j)}=\sum_{i\in\cI} i$. Then, %based on the element-wise permutation in \eqref{eq:elem-perm} for every segment, the vector permutation $\pi$ is defined as 
% \begin{equation}
%     \pi : \bz_1,\dots,\bz_p \rightarrow \be.
% \end{equation}
%In practice, 
the element-wise permutation from $w_L^{(j)},j=1,\dots,p$ can be used to flip the relevant positions in an all-zero binary vector with length $n$ to obtain the error pattern vector $\be$, as shown in Fig. \ref{fig:two-llvl-part}.

\begin{example}\label{ex:local_permut}
    Suppose we have the received sequence $\br=[0.5,-1.2,0.8,1.8,-1,-0.2,0.7,-0.9]$ similar to Example \ref{ex:permut}. We use the segments defined in Example \ref{ex:h1_subset_h2} as 
        $$\cS_{j_2}=\{2,4,7\},\;\;\;\cS_{j_2}^\prime=\{1,3,5,6,8\}.$$
    Now, the local permutation function based on $|r_i|,i\in[1,8]$ in ascending order can be obtained as follows:
    \[
        \dot{\pi}^{(j_2)} : [1,2,3]\rightarrow[7,2,4], \;\;\;\;
    %%\]
    %%\[
        \dot{\pi}^{\prime(j_2)} : [1,2,3,4,5]\rightarrow[6,1,3,8,5]
    \]
\end{example}
Now, let us define an efficient framework for error pattern generation based on sub-patterns that plays the role of guidelines to generate valid sub-patterns only. This framework consists of bases for the formation of error patterns and a minimum logistic weight that each base can take.   We begin with defining the bases with respect to the syndrome elements as follows:

\begin{definition}%\label{def:}
    \textbf{Error Pattern Bases}: A base for the error patterns, denoted by $[f_1\,f_2\,\dots\,f_p]$ for $p$ segments, determines the segments contributing their sub-patterns to the error patterns given by logistic weight $w_L$ as 
    \begin{equation}
        w_L = \sum_{j=1}^p f_j\cdot w_L^{(j)}
    \end{equation}
    where $f_j$ can get the following values: 
    \begin{equation}\label{eq:}
    f_j=
    \begin{dcases}
        \{0,1\} & s_j=0,\\
        \{1\} & s_j=1.\\
    \end{dcases}
    \end{equation}
    The segments with $f_j=0$ are called \emph{frozen segments} where the sub-pattern contributed by segment $j$ is empty. The total number of bases is $\prod_{j=1}^{p} 2^{1-s_j}$ that can be between 1 and $2^p$ depending on $s_j,j\in[1,p]$.   
\end{definition}
Note that when $s_j=0$ for segment $j$, this segment might be error-free. That is the reason why we have error pattern bases excluding the sub-patterns of such segments by setting $f_j=0$. Moreover, when we have $s_j=0$ and $f_j=1$, since the segment $j$ can have sub-patterns with an even weight and the smallest even number of parts is 2, we need to have $w_L^{(j)}\geq 3$ as $3=1+2$ gives the first two most probable erroneous positions. That is, the first error pattern $\bz$ for this segment will be $z_1=z_2=1$ and $z_i=0,i\geq3$ or $\bz=[1\;1\;0\cdots0]$.
On the contrary, we necessarily need $f_j=1$ and $w_L^{(j)}\geq 1$ when $s_j=1$. That is, we cannot have an empty sub-pattern for such segment $j$ in this case.%, i.e., $f_j\neq0$. 

\begin{proposition}\label{prp:min_weight}
    Given the segments' syndrome $[s_1\,s_2\,\cdots\,s_p]$ and the pattern base $\bs=[f_1\,f_2\,\dots\,f_p]$ for $p$ segments, the minimum $w_L$ that every pattern base can give is 
    \begin{equation}\label{eq:wL_min_overall}
        \underline{w}_L(\bs) = \sum_{j=1}^p f_j\cdot \underline{w}_L^{(j)}(s_j),
    \end{equation}
    where $\underline{w}_L^{(j)}(s_j)$ is 
    \begin{equation}\label{eq:wL_min}
        \underline{w}_L^{(j)}(s_j) = 3 - 2s_j.
    \end{equation}
    Thus, the overall logistic weight $ w_L(\bs)$ and sub-weights $w_L^{(j)}(s_j),j\in[1,p]$ must satisfy
    \begin{equation}\label{eq:wt_constraints}
        w_L(\bs) \geq \underline{w}_L(\bs) \text{ and }  w_L^{(j)}(s_j) \geq \underline{w}_L^{(j)}(s_j).
    \end{equation}
\end{proposition}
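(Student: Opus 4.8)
The plan is to reduce the minimization of the overall logistic weight to $p$ independent per-segment minimizations, exploiting that the segments have mutually disjoint index sets $\cS_j$ so that each sub-weight $w_L^{(j)}$ depends only on the local positions within its own segment. From the base decomposition $w_L=\sum_{j=1}^p f_j\cdot w_L^{(j)}$ in the definition of error pattern bases, together with the nonnegativity of every sub-weight and the fact that frozen segments ($f_j=0$) contribute nothing, the total $w_L$ is minimized exactly when each contributing sub-weight is individually minimized subject to its parity constraint. Hence it suffices to identify $\underline{w}_L^{(j)}(s_j)$, the smallest admissible sub-weight for a single segment, and then sum over $j$.

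First I would establish a short auxiliary fact: for a segment of any size, the minimum logistic sub-weight among all sub-patterns of a fixed Hamming weight $h$ equals $1+2+\cdots+h=h(h+1)/2$, attained by placing the $h$ errors at the $h$ most reliable local positions (local indices $1,\dots,h$). This is immediate from the local logistic sub-weight $w_L^{(j)}=\sum_{i\in\cI}i$ with $\cI\subset[1,|\cS_j|]$: since the positions carrying a one are distinct positive integers, their index-sum is minimized by choosing the $h$ smallest available local indices.

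Next I would impose the parity constraint dictated by the syndrome via \eqref{eq:even_odd}. When $s_j=1$ the sub-pattern weight $h$ must be odd; minimizing $h(h+1)/2$ over odd $h\geq 1$ attains its minimum at $h=1$, giving $\underline{w}_L^{(j)}=1$. When $s_j=0$ and the segment contributes ($f_j=1$), the weight $h$ must be even and nonzero; minimizing over even $h\geq 2$ attains its minimum at $h=2$, giving $\underline{w}_L^{(j)}=1+2=3$. These two cases combine into the compact formula $\underline{w}_L^{(j)}(s_j)=3-2s_j$ of \eqref{eq:wL_min}. Substituting into the decoupled minimization yields $\underline{w}_L(\bs)=\sum_{j=1}^p f_j\cdot\underline{w}_L^{(j)}(s_j)$, and the admissibility constraints \eqref{eq:wt_constraints} follow at once, since any realizable assignment must have $w_L^{(j)}\geq\underline{w}_L^{(j)}(s_j)$ for each contributing segment and hence $w_L\geq\underline{w}_L(\bs)$.

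The only genuinely load-bearing step is the auxiliary monotonicity fact that packing errors into the lowest local indices minimizes the logistic weight for a fixed Hamming weight; once that is in hand, the jump from the smallest odd cardinality ($h=1$, cost $1$) to the next odd one ($h=3$, cost $6$), and from the smallest even cardinality ($h=2$, cost $3$) to the next ($h=4$, cost $10$), confirms that no higher-weight sub-pattern can undercut the claimed minima. I expect the main subtlety to be bookkeeping rather than mathematical depth: one must state explicitly that the per-segment minimizations decouple (guaranteed by the disjointness of the $\cS_j$) and that the bound $w_L^{(j)}\geq 3$ for a contributing even segment is a consequence of $f_j=1$ excluding the empty sub-pattern, as already noted in the text preceding the proposition.
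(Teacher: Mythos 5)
Your proposal is correct and follows essentially the same route as the paper: the paper's (very terse) proof also identifies the per-segment minima $\underline{w}_L^{(j)}(s_j)\in\{3,1\}$ from the parity argument (smallest odd weight $h=1$ giving sub-weight $1$; smallest nonzero even weight $h=2$ giving sub-weight $1+2=3$) and then sums over the contributing segments. Your write-up merely makes explicit two things the paper leaves implicit --- the monotonicity fact that the minimum sub-weight for Hamming weight $h$ is $h(h+1)/2$, and the decoupling of the per-segment minimizations via disjointness of the $\cS_j$ --- which is a useful tightening but not a different argument.
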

\begin{proof}
    Equation \eqref{eq:wL_min} follows from function $\underline{w}_L^{(j)}(s_j):\{0,1\}\rightarrow\{3, 1\}$ as discussed earlier, which maps the minimum non-zero $w_L^{(j)}$ to 3 when $s_j=0$ and maps to 1 when $s_j=1$. Then, Equation \eqref{eq:wL_min_overall} clearly holds for the minimum of overall logistic weight which is denoted by $\underline{w}_L(\bs)$.
\end{proof}
Observe that the base patterns are used to efficiently enforce the minimum weight constraints in \eqref{eq:wt_constraints}. The importance of the base patterns is realized when we recall that the level-1 integer partitioning allows permutation and repetition of parts (here, sub-weights).
\begin{example}\label{ex:4}
    Given $s_1=0,s_2=1$ and $s_3=1$, we would have $2\times 1\times 1=2$ error pattern bases $[f_1\;f_2\;f_3]$ and their minimum weights/sub-weights as follows:
    $$[f_1\;f_2\;f_3]=[0\;1\;1],\underline{w}_L=2,[\underline{w}_L^{(1)}=0\;\;\underline{w}_L^{(2)}=1\;\;\underline{w}_L^{(3)}=1],$$
    $$[f_1\;f_2\;f_3]=[1\;1\;1],\underline{w}_L=5,[\underline{w}_L^{(1)}=3\;\;\underline{w}_L^{(2)}=1\;\;\underline{w}_L^{(3)}=1].$$
    Now, for $w_L=4$, the sub-weights  $[w_L^{(1)}\;w_L^{(2)}\;w_L^{(3)}]$ are $[0\;1\;3],[0\;2\;2],$ and $[0\;3\;1]$. As can be seen, $w_L^{(1)}=0$, i.e., segment 1 is frozen, and all the sub-weights were generated with the pattern base $[0\;1\;1]$. However, for $w_L=5$, the sub-weights are $[0\;1\;4],[0\;2\;3],[0\;3\;2],[0\;4\;1],$ and $[3\;1\;1]$ where the last one is based on the pattern base $[1\;1\;1]$ (note that $\underline{w}_L=5$ for this base). %The analogy to the overall weight and the sub-weights are shown in Fig. \ref{fig:scales}.
    % \begin{figure}[h]
    %     \centering
    %     \includegraphics[width=0.5\columnwidth]{subweight_scales.eps}
    %     \caption{Analogy of segment weights (or sub-weights) and the overall weight $w_L=5$ on a scale. Note that (d) is representing the pattern base $[1\;1\;1]$ and the rest are based on the pattern base $[0\;1\;1]$ where segment 1 is frozen.}
    %     \label{fig:scales}
    %     \vspace{-5pt}
    % \end{figure}
\end{example}
Following the example above, we define our tailored integer partitioning scheme for combining the sub-patterns.
\begin{definition}\label{def:}
    \textbf{Logistic Weight and Sub-weights}: 
    Suppose we have a block code with $p$ segments. The overall logistic weight $w_L$ can be distributed among segments by sub-weights  $w_L^{(j)} = \kappa_j + c_j$ as 
    \begin{equation}
        w_L = \sum_{j=1}^p f_j \cdot w_L^{(j)} = \sum_{j=1}^p f_j (\kappa_j + c_j),
    \end{equation}
where $\kappa_j\geq \underline{w}_L^{(j)}$ is the initial value for $w_L^{(j)}$ %given $f_j=1$ 
and $c_j\geq 0$ is the increments to get larger $w_L^{(j)}$.  
\end{definition}

\begin{example}\label{ex:two_seg_two_lev}
    Given $s_1=1$ and $s_2=0$, we would have $1\times 2=2$ error pattern bases $[f_1\;f_2]$ as follows:
    $$[f_1\;f_2]=[1\;0],\underline{w}_L=1, [\underline{w}_L^{(1)}=1\;\;w_L^{(2)}=0],$$
    $$[f_1\;f_2]=[1\;1],\underline{w}_L=4, [\underline{w}_L^{(1)}=1\;\;\underline{w}_L^{(2)}=3].$$
    As Fig. \ref{fig:2seg_level1_parts} shows, segment 2 is frozen up to $w_L=3$ and all sub-weights are generated by base $[1\;0]$ at level-1 partitioning. Hence no error pattern is allocated to this segment for $1\leq w_L\leq 3$. 
    Note that the all-zero error pattern is not valid in this case, i.e., $w_L>0$. 
    \begin{figure}[ht]
        \centering
        \includegraphics[width=0.7\columnwidth]{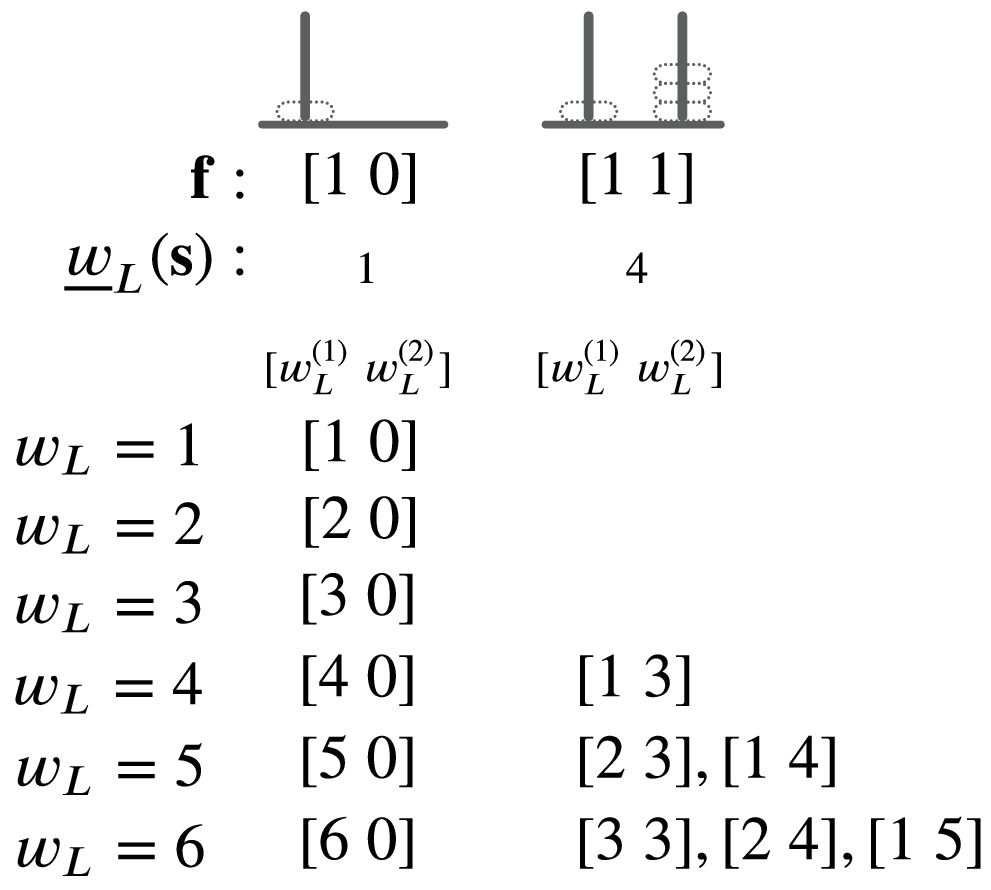}
        \caption{The sub-weights generated based on $\bs=[s_1=1\;s_2=0]$ for two-segment based GRAND. For $w_L=1,2,3$, the base $\bf=[1\;0]$ is activated only because  the base $\bf=[1\;1]$ has $\underline{w}_L=4$. We have both bases activated for $w_L\geq4$.}
        \label{fig:2seg_level1_parts}
        %\vspace{-5pt}
    \end{figure}
    Furthermore, Fig. \ref{fig:two_level_part} shows the two levels of partitioning specifically for $w_L=6$ when the partitions $[w_L^{(1)},w_L^{(2)}]=[2\;4]$ is selected in the first level. Following the permutation functions in Example \ref{ex:local_permut}, the error pattern vector $\be$ is given as well.
    \begin{figure}[h]
        \centering
        \includegraphics[width=0.9\columnwidth]{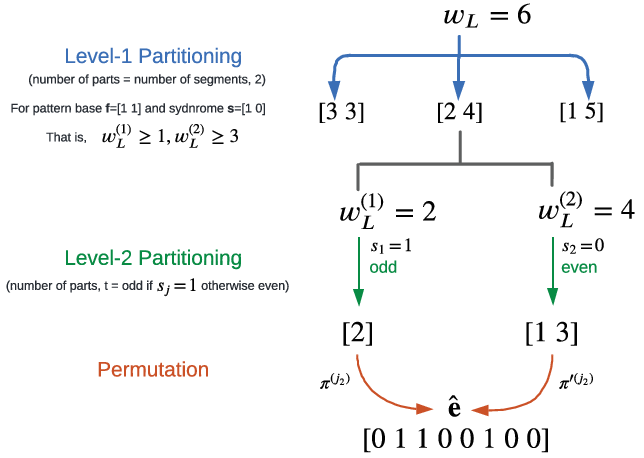}
        \caption{An example of two-level error pattern generation based on sub-patterns when $\bs=[s_1=1\;s_2=0]$. Note that $n1$ and $n2$ are the lengths of segments 1 and 2.}
        \label{fig:two_level_part}
        \vspace{-5pt}
    \end{figure}
\end{example}
%{\color{magenta} 
The idea of splitting the logistic weight into sub-weights for the segments is based on the assumption that the least reliable symbols are almost evenly distributed among the segments. The statistical results for 15000 transmissions of eBCH(128,106) codewords over AWGN channel show that this assumption is actually realistic. Fig. \ref{fig:low-reli_distrib} shows the distribution of 64 least reliable symbols between two 64-length segments, by locating and counting them in the segments for each transmitted codeword. The mean and standard deviation of the bell-shaped histogram for each segment is 32 and 2.85, respectively. Moreover, as the additive noise follows Gaussian distribution and it is independent and identically distributed among the symbols, these results were expected. 
\begin{figure}[h]
    \centering
    \includegraphics[width=0.8\columnwidth]{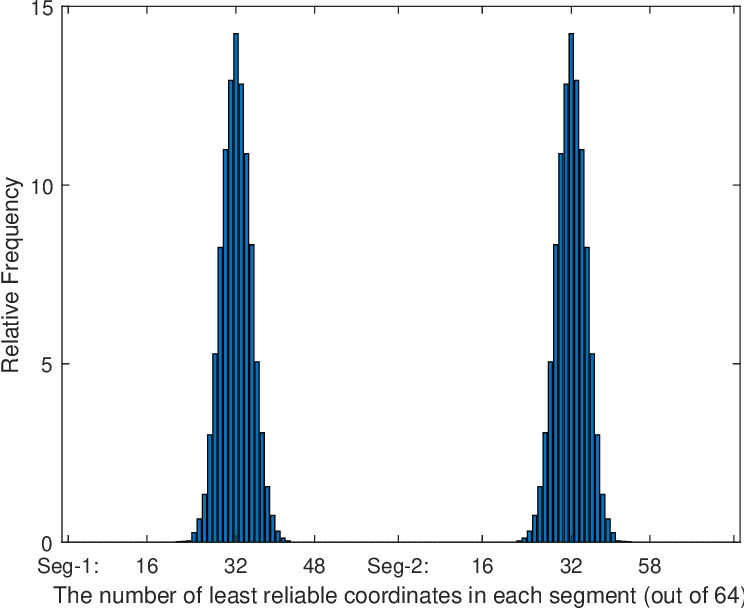}
    \caption{Distribution of the 64 least reliable coordinates between two segments for the 15000 independent transmissions of eBCH(128,106) codewords.} 
    \label{fig:low-reli_distrib}
    \vspace{-5pt}
\end{figure}

Now, let us look at a realistic example comparing the ORBGRAND with Segmented ORBGRAND in terms of searching for a valid error pattern.
\begin{example}\label{ex:err_pattern_comparison}
    Suppose a codeword of eBCH code (64,45) is transmitted over an AWGN channel and the hard decision on the received sequence leads to three erroneous bits at coordinates of $\supp(\be)=[1\;15\;23]$. One employs ORBGRAND to find these coordinates. This goal is achieved after 57 attempts sweeping through logistic weights $w_L=1\rightarrow 12$. Fig. \ref{fig:ex_queries_seq} illustrates the Euclidean distance of all queries. 
    \begin{center}
    \footnotesize
        \begin{tabular}{|c |c |c| c|| c| c| c| c|}%\label{tab:seg_grand}
         \hline
         & $w_L$ & $\supp(\bhe)$ & $d_E^2$ & & $w_L$ & $\supp(\bhe)$ & $d_E^2$ \\ 
         \hline\hline
         1 & 1 & $\{23\}$ & 20.58 & 8 & 5 & $\{39,1\}$ & 21.49\\ 
         \hline
         2 & 2 & $\{1\}$ & 20.80 & 9 & 5 & $\{42\}$ & 21.68\\ 
         \hline
         3 & 3 & $\{23,1\}$ & 20.93 & $\cdots$ & $\cdots$ & $\cdots$ & $\cdots$\\ 
         \hline
         4 & 3 & $\{39\}$ & 21.14 & 54 & 11 & $\{42,33\}$ & 23.57\\ 
         \hline
         5 & 4 & $\{39,23\}$ & 21.27 & 55 & 12 & $\{36,23\}$ & 22.95\\ 
         \hline
         6 & 4 & $\{9\}$ & 21.35 & 56 & 12 & $\{50\}$ & 22.98\\ 
         \hline
         7 & 5 & $\{23,9\}$ & 21.48 & 57 & 12 & $\{23,15,1\}$ & 23.10\\ 
         \hline
        \end{tabular}
    %\end{table}
    \end{center}
    Now, if one divides the codeword into two equal-length segments with coordinates in $\cS_1,\cS_2$ based on two constraints, it turns out that segments 1 and 2 have odd and even numbers of errors since $s_1=1$ and $s_2=0$. The proposed Segmented ORBGRAND can find the error coordinates in only 7 queries as illustrated in the table below. 
    \begin{center}
    \footnotesize
        \begin{tabular}{|c |c |c| c| c| c|}%\label{tab:seg_grand}
         \hline
         & $w_L$ & $[w_L^{(1)}\;w_L^{(2)}]$ & $\supp(\bhe)\cap\cS_1$ & $\supp(\bhe)\cap\cS_2$ & $d_E^2$ \\ 
         \hline\hline
         1 & 1 & $[1\; 0]$ & $\{9\}$ & $\{\}$ & 21.35\\ 
         \hline
         2 & 2 & $[2\; 0]$ & $\{15\}$ & $\{\}$ & 22.54\\ 
         \hline
         3 & 3 & $[3\; 0]$ & $\{8\}$ & $\{\}$ & 22.71\\ 
         \hline
         4 & 4 & $[4\; 0]$ & $\{50\}$ & $\{\}$ & 22.98\\ 
         \hline
         5 & 4 & $[1\; 3]$ & $\{9\}$ & $\{23,1\}$ & 21.83\\ 
         \hline
         6 & 5 & $[5\; 0]$ & $\{3\}$ & $\{\}$ & 23.06\\ 
         \hline
         7 & 5 & $[2\; 3]$ & $\{15\}$ & $\{23,1\}$ & 23.10\\ 
         \hline
        \end{tabular}
    \end{center}
    The patterns found by Segmented ORBGRAND are circled in Fig. \ref{fig:ex_queries_seq}. As can be seen, by segmentation, we can avoid checking many invalid error patterns.  Note that in this example, since the logistic weight of each error location is smaller than the number of redundant bits in the systematic form, all error positions belong to the set of least reliable bits in the order statistics decoding. Therefore, this error combination can be corrected on the first attempt by an $i$-order OSD decoding. However, we cannot be certain that this candidate codeword is the closest to the received sequence, necessitating further attempts and comparisons.
    \begin{figure}[h]
        \centering
        \includegraphics[width=0.8\columnwidth]{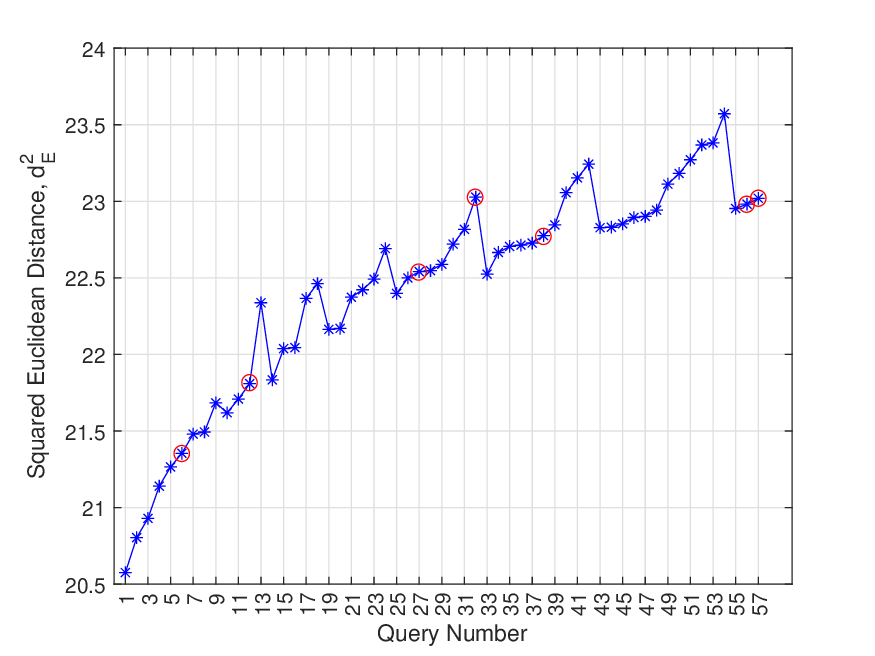}
        \caption{The squared Euclidean distance of queries up to the first codeword in ORBGRAND. The red circles indicate the queries performed by Segmented ORBGRAND with an order different from ORBGRAND. Note that since the metric is not \emph{monotonically increasing}, i.e., not always increasing or remaining constant, it doesn't give the ML order. } %Note that the frequent drop in $d_E^2$ is due to the increment of $w_L$.}
        \label{fig:ex_queries_seq}
        \vspace{-5pt}
    \end{figure}
\end{example}
\begin{remark}
    The logistic weight $w_L$ in ORBGRAND differs from that in segmented ORBGRAND because they generate different error patterns. This difference arises from the disagreement between the permuted integer parts of one segment in ORBGRAND and multiple segments in segmented ORBGRAND, unless the symbol reliability yields the following permutations for the cases of one segment and two segments (or equivalently for more segments):
    \begin{equation}
        \begin{cases}
        \dot{\pi}(i) = \dot{\pi}^{(1)}(i) & \forall i\in[1,|\mathcal{S}_1|],\\ 
        \dot{\pi}(|\mathcal{S}_1|+i) = \dot{\pi}^{(2)}(i) & \forall i\in[1,|\mathcal{S}_2|],\\ 
        \end{cases}
    \end{equation}
    which represents  just one permutation out of all $n!$ possible permutations. 
    In Example \ref{ex:err_pattern_comparison}, $w_L=5$ for segmented ORBGRAND gives the flipping coordinates 
    $$\dot{\pi}^{(1)}(\{2\})\cup \dot{\pi}^{(2)}(\{1,2\})=\{15\}\cup\{23,1\},$$
    whereas achieving these flipping coordinates in ORBGRAND requires $w_L=12$ as
    $$\dot{\pi}(\{1,2,9\})=\{23,1,15\}.$$
    Notice that due to the discrepancy in permutations, coordinate 15, which is the 9th least reliable coordinate in the entire sequence, becomes the 2nd least reliable coordinate in Segment 1. Thus, this flipping pattern is reached faster by segmentation, utilizing a smaller $w_L$. 
\end{remark}
\begin{remark}
    We utilize multiple sub-pattern generators, and it might appear that the complexity would increase linearly with the number of segments for each generated pattern. However, would like to point out that this is not the case. The reason is that the weight of the constituent sub-patterns within each segment proportionally reduces due to their relatively smaller sub-weights compared to the overall weight. Furthermore, we only generate a subset of the error patterns that ORBGRAND produces. Consequently, as demonstrated by the number of operations in Fig.~\ref{fig:op-ebch-compar}, %in Section \ref{sec}, 
the complexity significantly decreases with the segmentation approach.
\end{remark}
%}
We can combine the sub-patterns generated by $w_L^{(j)},j=1,\dots,p$, in an arbitrary order as the sub-weights of all the combinations are summed up to $w_L$.

% {\color{blue}
\section{Tuning Sub-weights for Unequal Distribution of Errors among Segments}\label{sec:tuning}
In the previous section, we suggested initializing the parameter $\kappa_j$ by $\underline{w}_L^{(j)}\in\{1,3\}$ depending on the value of $s_j\in\{1,0\}$. Although we are considering the AWGN channel where the random noise added to every symbol is independent and identically distributed (i.i.d.), there is a possibility that the distribution of errors is significantly unbalanced, that is, the weight of the error vector in one segment is quite larger than the other one(s). We can get statistical insight into this distribution by counting the low-reliability symbols (or small $|r_i|$) in each segment, denoted by $a_j$. Then, we adjust the initialization of $\kappa_j$s and make them proportional to the number of symbol positions in the segment with $|r_i|<\epsilon$ where $\epsilon$ is an arbitrary threshold for low-reliability symbols. This will account for the unequal distribution of errors among the segments. Suppose the expected number of errors in a segment with length $L$ is 
\begin{equation}
    \mu_e^{(j)} = L\cdot 2P(|r|<\epsilon), %L\cdot 2P(|r|<\epsilon),
\end{equation}
where the probability $Pr(\cdot)$ follows the Gaussian distribution with mean $1$ and noise variance $\sigma_n^2$. Then, we can adjust $k_j$ as 
\begin{equation}
    k_j= \underline{w}_L^{(j)}+\Big\lceil \frac{\max\{a_j\}-a_j}{\rho\cdot\mu_e^{(j)}}\Big\rceil,
\end{equation}
where $\rho\cdot\mu_e^{(j)}$ is used for normalization of the relative difference with the number of low-reliability symbols in the segments. The parameter $\rho\leq 0 $ can be adjusted to get a better result. We denote the second term by $\tau_j=\Big\lceil \frac{\max\{a_j\}-a_j}{\rho\cdot\mu_e^{(j)}}\Big\rceil$. 
%Note that $2Pr(\cdot)$ accounts for both negative and positive tails of the probability distribution. 
%Suppose $\epsilon$ is a small positive value representing a threshold for defining low-reliable received symbols. Now, if there exists less than the expected number of positions with  $|r_i|<\epsilon$ in the segment, we can increase $\kappa_j$ by some integer $\tau_j\geq 1$. 
Note that the offset $\tau_j$ is only for the initialization stage and we should consider it when we conduct integer partitioning of $w_L^{(j)}$ by subtracting the offset from the segment weight, i.e., $w_L^{(j)}-\tau_j$. Although these adjustments (addition and subtraction of $\tau_j$) seem redundant and ineffective, they will postpone the generation of large-weight patterns for the segment(s) with small $a_j$, and hence we will get a different order of patterns that may result in a fewer number of queries for finding a valid codeword. Let us have a look at an example.
\begin{example}
    Let us consider two segments with $L=32$ elements, the corresponding syndrome element $s_1=1,s_2=0$, threshold  $\epsilon=0.2$ and $\mu_e=L\cdot 2Pr(|r|<\epsilon)=8$. We realize that there are 11 and 3 elements in segments 1 and 2, respectively, satisfying $|r_i|<\epsilon$. Having fewer low-reliable positions than the expected number (i.e., $3<\mu_e$) implies that the possibility of facing no errors in segment 2 is larger than having at least 2 errors (recall that the Hamming weight of error sub-pattern for this segment should be even due to $s=0$). Therefore, in level-1 partitioning, we can increase the initial sub-weight for this segment from $\kappa_2=\underline{w}_L^{(2)}=3$ to $\kappa_2=\underline{w}_L^{(2)}+\tau_2=5$ by $\tau_2=2$ assuming $\rho=1/2$. This increase will delay generating sub-patterns with base $[1\;\;1]$ from $w_L=4$ to $w_L=6$ in Example \ref{ex:two_seg_two_lev}. This prioritizes checking all sub-patterns with sub-weights $[4\;\;0]$ and $[5\;\;0]$ hoping that we find the correct error pattern faster by postponing the less likely error patterns to a later time. Nevertheless, in level-2 partitioning when we want to generate the sub-patterns with sub-weight $\kappa_2=5$ and base $[1\;\;1]$, we should subtract $\tau_2$ from $\kappa_2=\geq5$; otherwise, we will miss the error patterns with smaller sub-weights, i.e., $w_L^{(2)}=3,4$.
\end{example}
The numerical evaluation of this technique for eBCH(128,106) with two segments and $\rho=0.3,\epsilon=0.2$ shown in the table below reveals a slight reduction in the average queries while the BLER remains almost unchanged. %The slight reduction in the queries is related to the cases where there is a significant imbalance between the distribution of low-reliability symbols among the segments. The significant imbalance between the segments does not guarantee that the distribution of the erroneous coordinates is also imbalanced. In this case, the tuning may result in the requirement for larger queries. That is the reason for gaining a slight reduction overall.  
The reduction in queries can be attributed to cases where there is an imbalance in the distribution of low-reliability symbols across segments. However, a significant imbalance between segments does not necessarily imply a significant imbalance in the distribution of erroneous coordinates. Consequently, tuning techniques in such scenarios may necessitate relatively larger queries, leading to only a slight reduction overall. These results demonstrate that the original segmented ORBGRAND without tuning overhead is good enough despite not considering the reliability imbalance between the segments. The reason comes from the imperfection of the reliability metric and complexity averaging over all received sequences.
%The table below illustrates this reduction: 
\begin{center}
\footnotesize
\begin{tabular}{c|ccccc} 
 %\hline
 %& \multicolumn{2}{c}{PAC(64,44)} & \multicolumn{2}{c}{eBCH(128,106)} \\ 
 \hline
$E_b/N_0$ & 3.5 & 4 & 4.5 & 5 & 5.5  \\ 
 \hline
without tuning & 30685 & 8358 & 1750 & 315 & 54  \\ 
 \hline
with tuning & 30492 & 8110 & 1661 & 291 & 48  \\ 
 \hline
\end{tabular}
\end{center}

%}
% Following \eqref{eq:wL_min} which gives the minimum values of $w_L^{(j)}$ with respect to $s_j$, the parameter $\kappa_j$ can get the following values depending on $s_j$:
% \begin{equation}\label{eq:}
%     \begin{dcases}
%         \kappa_j\geq 3 & s_j=0,\\
%         \kappa_j\geq 1 & s_j=1.\\
%     \end{dcases}
% \end{equation}

% }

\section{Complexity Analysis}\label{sec:complx}
%{\color{blue}
In this section, we discuss the expected reduction in the complexity (the average number of queries) of the proposed scheme. The total size of the search space is considered $2^n$ where we have $2^k$ valid codewords. According to Theorem 2 in \cite{duffy-tit}, the distribution of the number of guesses for a non-transmitted codeword is almost exponential, with the rate $2^{n(1-R)}$ as the length $n$ of the binary codeword increases. Consequently, the complexity required to achieve ML performance of any GRAND algorithm is a function of redundancy, $n-k$, which is of the order of $2^{n-k}$ queries. Alternatively, one can consider the geometric distribution (as exponential distribution is a continuous analogue of the geometric distribution) where the random variable $X$ is defined as the number of failures until the first success, i.e., finding the first valid codeword. 
% The search for finding the first valid codeword resembles the geometric distribution where the random variable is defined as the number of failures until the first success, i.e., finding the first valid codeword. However, unlike geometric distribution, every query in ORBGRAND cannot be considered independent as they are checked based on the order given by the weight function. Fig. \ref{fig:geo_queries} shows the behaviour of the number of queries to the first valid codeword under ORBGRAND order. This experiment was performed for decoding 80,000 eBCH(64,45) codewords at $E_b/N_0=4$ dB where 44\% of the decodings required more than 100 queries.  Note that zero queries are considered for checking the hard decision of the received sequence.
\begin{figure}%[ht]
    \centering
    \includegraphics[width=0.8\columnwidth]{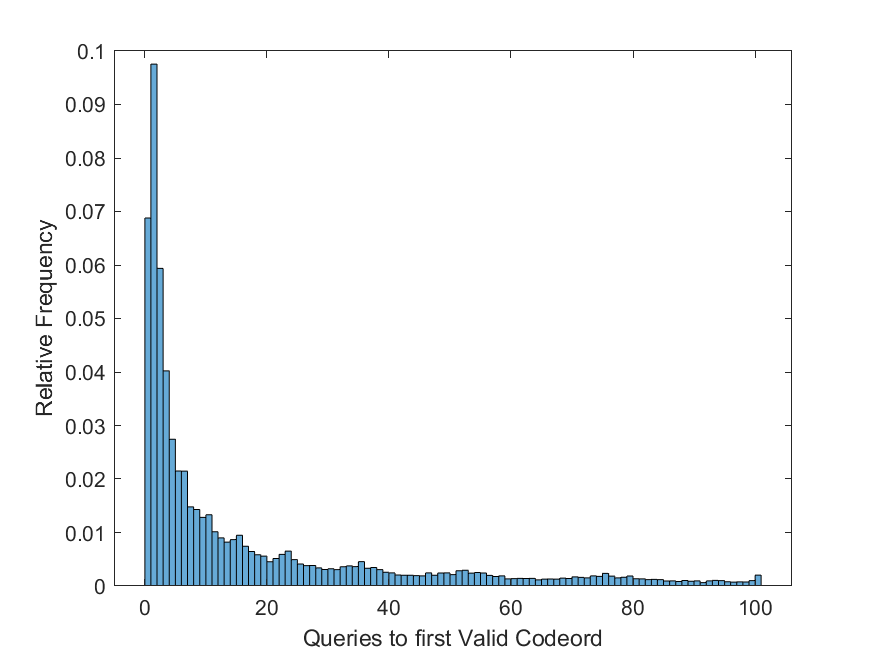}
    \caption{Relative frequency ($\approx p_i$) of the queries ($x_i,i\in[1,100]$) to the first valid codeword under ORBGRAND order for decoding 80,000 eBCH(64,45) codewords at $E_b/N_0=4$ dB (56\% of all decoding operations required less than 101 queries).}%where 44\% of the decoding operations required more than 100 queries.}} %As expected, the relative frequency gradually reduces for larger queries.}}
    \label{fig:geo_queries}
    %\vspace{-5pt}
\end{figure}
%Due to the special order of queries in ORBGRAND, modelling such a distribution is not trivial. Nevertheless, 
Regardless of the asymptotic distribution, the expected value $E(X)=\sum_{i}x_i\cdot p_i$, with $x_i=i$ and $p_i$ as the probability of finding a valid codeword at the $i$-th query, is a measure of the central tendency of a probability distribution, and it is calculated as the weighted average of all possible outcomes $x_i$, where the weights are the probabilities of each outcome $p_i$. Then, the probability of finding the first valid codeword after $m\geq1$ queries is $P(X=m)\approx\prod_{i=1}^{m-1}(1-p_i)p_m$ where we may not have $p_i=p_j$ for any $i\not = j$. 
The probability of finding a valid codeword $p_i$ changes by SNR and by the size of the search space. 
%If we consider $X$ as a random variable representing the abscissa, $x_i$ in Fig. \ref{fig:geo_queries} and $p_i$ as the relative frequency, then $E(X)\approx\sum_{i}x_iq_i$. 
The reduction in the sample space increases the probability of the outcomes, $p_i$. As the relative frequency of small $x_i$ in Fig. \ref{fig:geo_queries} (or the probability of small X in exponential and geometric distribution) is considerably larger than large ones, the expected value is shifted towards a smaller value, i.e., the expected value of queries will decrease.  
%On the other hand, the relative frequency of small $x_i$ is considerably larger than large ones, the expected value is shifted towards a smaller value, i.e., the expected value of queries will decrease.  
%Furthermore, the probability of finding the first valid codeword after $k\geq1$ queries is $P(X=k)\approx\prod_{i=1}^{k-1}(1-q_i)q_k$. Note that unlike geometric distribution, we do not assume $q_i=q_j$ for any $i\not = j$.

Now, let us consider the scenario where the search for a valid codeword is abandoned after $b$ queries. In this scenario, similar to the queries without abandonment, we have a reduction in complexity. Moreover, the abandonment threshold $b$ limits the scope of queries leading to potential decoding failure in ORBGRAND. Fig. \ref{fig:success_query} (a) illustrates the failure due to the limited scope of the search. As can be seen, the reduction of search space in (b) helps the valid codeword falling into the scope of queries with threshold $b$. Hence, the reduction in the search space of segmented ORBGRAND is equivalent to increasing the threshold $b$ of ORBGRAND under abandonment. 
% \begin{itemize}
%     \item Queries without abandonment: In this scenario, the segmented ORBGRAND cannot improve the BLER as there is no query constraint for finding the first valid codeword. However, due to the smaller search space, the segmented ORBGRAND is expected to find the codeword with fewer queries, hence it has a lower average complexity. This scenario applies to queries with abandonment given the abandonment threshold $b$ is large enough to have a valid codeword within. According to Theorem 2 in \cite{duffy-tit}, GRAND algorithms find the error pattern after approximately $2^{n-k}$ queries. For instance, this approximation of query bound for eBCH(128,106) is  4,194,304. 
%     \item Queries with abandonment: In this scenario, similar to the queries without abandonment, we have a reduction in complexity. Moreover, the abandonment threshold $b$ limits the scope of queries leading to potential decoding failure in ORBGRAND. Fig. \ref{fig:success_query} (a) illustrates the failure due to the limited scope of the search. As can be seen, the reduction of search space in (b) helps the valid codeword falling into the scope of queries with threshold $b$.
% \end{itemize}
As the maximum query in practice could be a bottleneck of the system and therefore it is important to evaluate the decoding performance and complexity under the abandonment scenario, we consider these two scenarios in the evaluation of segmented ORBGRAND in section \ref{sec:num_results}. 
\begin{figure}%[ht]
    \centering
    \includegraphics[width=0.8\columnwidth]{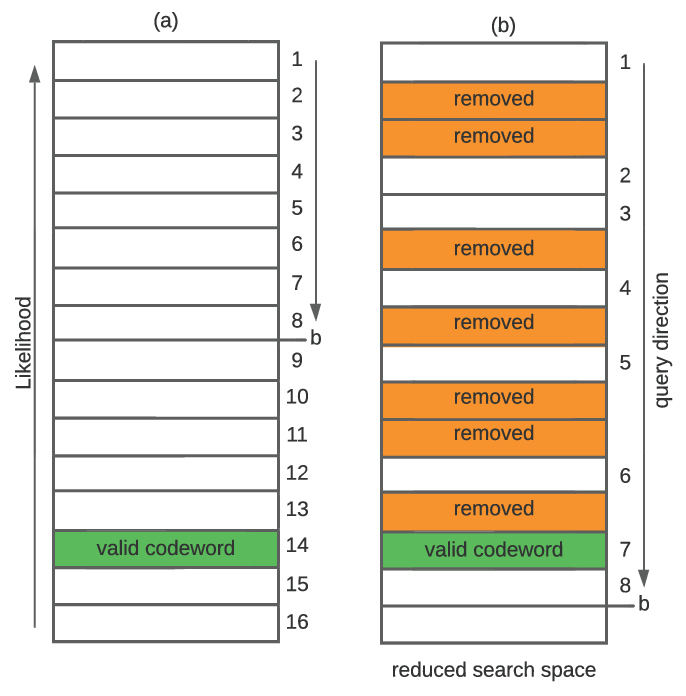}
     \caption{A sketch showing a stack of candidate sequences sorted in descending order with respect to a likelihood metric (the codeword at the top has the highest likelihood). With no abandonment condition, removing the invalid sequences accelerates reaching the first valid codeword by fewer queries (7 queries versus 14 queries). With abandonment after $b=8$ queries, case (a) will fail to reach the valid codeword.}
    %\caption{Reducing complexity vs improving performance: Two examples showing a stack of candidate codewords sorted with respect to likelihood (the codeword at the top has the highest likelihood). By removing the invalid codewords, we can reach to the first valid codeword faster before reaching the abandonment threshold $b=8$ (as shown in the two left figures), or we can decode successfully within $b$ queries (as shown in the two right figures).}
    \label{fig:success_query}
    \vspace{-10pt}
\end{figure}

%Due to the randomness of the Gaussian noise, we assume that every time we search for a valid codeword in the search space based on our weight function, the number of queries $X_i$ will be a random variable with unknown distribution. However, the sampling distribution of the queries tends to be Gaussian, and its dispersion is given by the central limit theorem. 

%Now, let us assume that $q$ is the probability of finding a valid codeword in the search space (regardless of the order of search) and $X$ is the random variable indicating the number of queries to find the first valid codeword. Then, the probability finding the first valid codeword after $k\geq1$ queries is $Pr(X=k)=(1-q)^{k-1}q$. As can be seen, this follows the Geometric distribution. Hence, the mean of $X$ is $1/q$. Since every constraint reduces the search space by factor of 2, then the mean reduces by the same factor, i.e., $2^p/q$ as the probability of finding a valid codeword increases as $q=2^p q$. Hence, we expect the reduction in the average number queries follows the same trend. 
%}
\section{Implementation Considerations}\label{sec:implement}
In this section, we propose a hardware-compatible procedure illustrated in Algorithm \ref{alg:dist_parts} to efficiently perform the first and second levels of weight partitioning with the required number of parts. An example of integer partitioning of $w=18$ into $t=4$ distinct parts is illustrated in Fig. \ref{fig:ex_k_parts_distinct}. We use this example along with Algorithm \ref{alg:dist_parts} to explain the procedure. %Recall that in the first level of integer partitioning where we partition $w_L$ into $[w_L^{(1)}\;\;w_L^{(2)}\cdots w_L^{(p)}]$, every $w_L^{(j)},j\in[1,p]$ does not need to be a distinct integer, to sum up to $w_L$ (i.e., repetition is allowed). However, in the second level of integer partitioning, sub-weight $w_L^{(j)}$ should be partitioned into distinct parts with 1) either an even or odd number of parts if there exists an $s_j$ associated with the segment, such as segment $\cS_{j_2}$ in Example \ref{ex:h1_subset_h2}, or 2) both even and odd parts if there is no $s_j$ associated with it, such as segment $\cS_{j_2}^\prime$ in Example \ref{ex:h1_subset_h2}. %according to \eqref{}. 
%In \cite{rowshan-const_GRAND}, we assumed that we generate all distinct parts, regardless of having an even or odd number of parts, and then we discard the unwanted ones by pre-evaluation. Consider the case where we have an all-one row in parity check matrix $\mathbf{H}$, i.e., overall parity, if we can generate only odd or even number of distinct parts in level-2 partitioning, instead of discarding the unwanted ones, that can save reduce the number generated patterns for checking by half. We have a similar requirement in the segmented GRAND. 
% Luckily, 
% %Note that we do not need to generate all the even and odd number of parts for an specific $w_L^{(j)}$ then pass the one ones we are interested in but we only generate even parts or odd number of parts of $w_L^{(j)}$. 
% there is a simple procedure to generate distinct integers (for level-2 partitioning) with only an even number of parts or an odd number of parts that is hardware compatible as well. This procedure is illustrated in Algorithm \ref{alg:dist_parts}. 
%An example of integer partitioning of $w=18$ into $t=4$ distinct parts is illustrated in Fig. \ref{fig:ex_k_parts_distinct}. We use this example along with Algorithm \ref{alg:dist_parts} to explain the procedure. 
\begin{figure}%[ht]
    \centering
    \includegraphics[width=0.7\columnwidth]{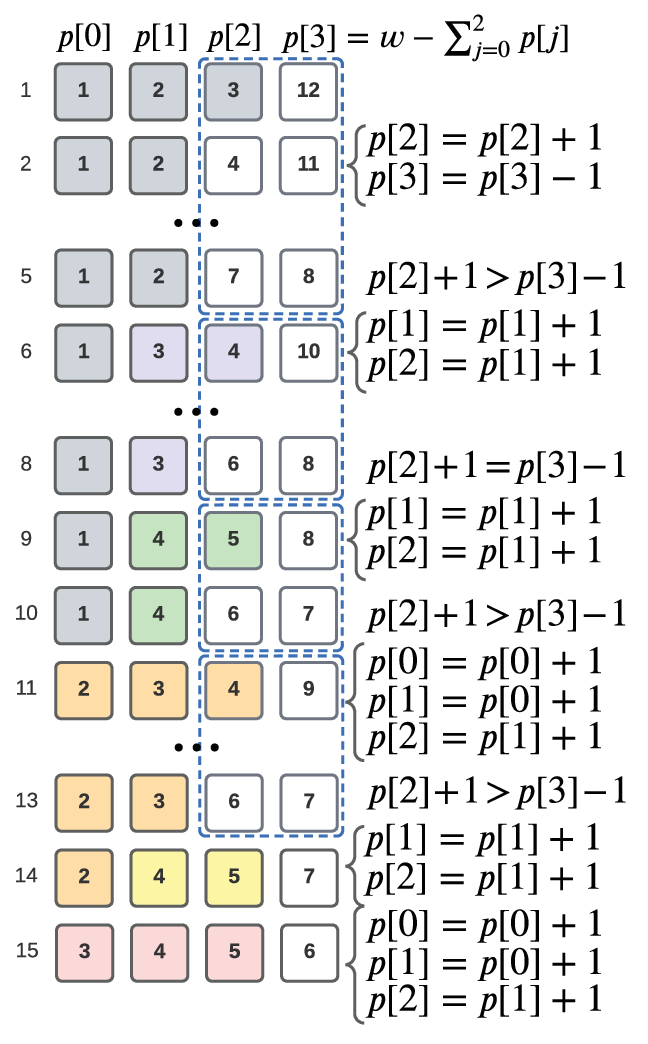}
    \caption{An example showing integer partitioning procedure for $w=18$ into four distinct integers, $k=4$.}
    \label{fig:ex_k_parts_distinct}
    %\vspace{-5pt}
\end{figure}
The procedure for every integer $w=w_L^{(j)},j\in[1,p]$ starts with an initial sequence $\bp$ of $t$ elements as performed in lines 2-3 of Algorithm \ref{alg:dist_parts}. %as shown below:
% \begin{figure}[h]
%     \centering
%     \includegraphics[width=0.4\columnwidth]{int_parts_distinct_k_initial.eps}
%     %\caption{Initial $k$ parts of integer $w$. }
%     \label{fig:k_parts_distinct_initial}
%     %\vspace{-5pt}
% \end{figure}
%This initialization is performed in lines 2-3 of Algorithm \ref{alg:dist_parts}. 
Before the generation of the next sequence of integer parts, we check to see which of the following two operations should be sought.  
\begin{enumerate}
    \item Increment-and-decrement: If we have $p[t-2]+1 < p[t-1]-1$, we keep the sub-sequence $p[0],p[1],\dots,p[t-3]$ while incrementing $p[t-2]=p[t-2]+1$ and decrementing $p[t-1]=p[t-1]-1$. These operations are performed in the last two parts in white cells, circled by blue dashed lines in Fig. \ref{fig:ex_k_parts_distinct} except for the first sequence in the circle that plays the role of the basis for these operations. As long as $i=1$ in for loop in lines 12-23 in Algorithm \ref{alg:dist_parts}, this operation continues to generate new sequences. Resuming this loop is performed by line 21. %Note that $p[k-1]=p^*$. Note that when $i=1$, we have $p[k-1-i]+i=p[k-2]+1$ and $p^*=p[k-1]-1$. Then, line 12 checks whether $p[k-2]+1<p[k-2]$.
    Note that the assignment in line 14 of Algorithm \ref{alg:dist_parts} is the general form for any $i$. For instance, we can get line 12 by substituting $i=1$ in line 14. Here, we showed them separately because we predominantly have $i=1$. 
        \begin{figure}[h]
            \centering
            \includegraphics[width=0.7\columnwidth]{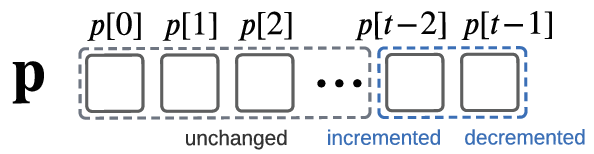}
            %\caption{Initial $k$ parts of integer $w$. }
            \label{fig:incr_decr}
            %\vspace{-5pt}
        \end{figure}
        
    \item Re-initialization: If we have $p[t-2]+1 \geq p[t-1]-1$, we would have non-distinct parts in the sequence in the case of equality or repeated sequence when inequality holds. Hence, we need to change the other parts, i.e., $p[t-1-i],t-1\leq i\leq 2$. The extent of change is determined by some $i>1$ such that the condition in line 15 is met. 
    
        \begin{figure}[h]
            \centering
            \includegraphics[width=0.9\columnwidth]{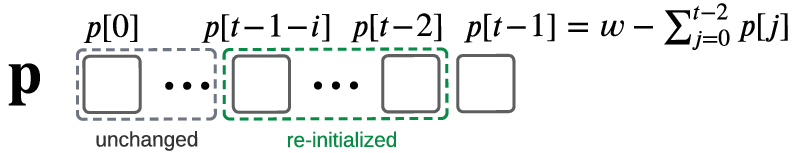}
            %\caption{Initial $k$ parts of integer $w$. }
            \label{fig:re_init}
            %\vspace{-5pt}
        \end{figure}
        The re-initialization for such an $i$ will be as follows:
        $$[p[t\!-\!1\!-\!i]\!+\!1,p[t\!-\!1\!-\!i]\!+\!2,\dots,p[t\!-\!1\!-\!i]\!+\!i\!+\!1]$$
        For instance, in Fig. \ref{fig:ex_k_parts_distinct}, the sequences 6,9, and 14 are re-initialized when $i=2$ and the sequences 11 and 15 when $i=3$. Note that when $i=t-1$, i.e., all parts except for $p[t-1]$ are re-initialized and still the condition $p[t-2]+1 \geq p[t-1]-1$ in line 15 is not met, the process ends. This means all the possible options for parts have been checked.
\end{enumerate} 
As mentioned in Section \ref{sec:prelim}, no parts can be larger than the length of the code. Here, we need to consider this as well for the length of the segment denoted by $p_{\max}$ in Algorithm \ref{alg:dist_parts} as you can observe in lines 6 and 18. 

A similar procedure can be used for the first level of integer partitioning for the error pattern bases by lifting the constraint on the distinctness of the parts and allowing the permutation. However, we need to consider the minimum sub-weight (1 or 3 depending on $s_j$) that each segment can take. Given these differences, one can observe that the initialization of non-frozen segments can allow repetition of 1's or 3's, instead of distinct values of $1,2,3,\cdots$. For instance, for three segments with $s_1=s_2=1,s_3=0$ and base $[1\;1\;1]$, we can start the above procedure for $w_L=7$ with $\bp=[1\;1\;5]$, then we proceed with $\bp=[1\;2\;4]$, $\bp=[1\;3\;3]$. The rest are $\bp=[2\;1\;4]$, $\bp=[2\;2\;3]$, and finally, $\bp=[3\;1\;3]$. In this example, the order of re-initialization is the same as Fig. \ref{fig:ex_k_parts_distinct}.

%\clearpage
%\setlength{\textfloatsep}{0pt}
\begin{algorithm}[t]
    \footnotesize
    \caption{Non-recursive integer partitioning to a fixed number of distinct parts}
    \label{alg:dist_parts} %It should come after the caption
    %\SetAlgoLined
    \DontPrintSemicolon
    %\SetKwData{Left}{left}
    %\SetKwData{This}{this}
    %\SetKwData{Up}{up}
    %\SetKwFunction{Union}{Union}
    %\SetKwFunction{FindCompress}{FindCompress}
    \SetKwInOut{Input}{input}
    \SetKwInOut{Output}{output}
    %\SetKwRepeat{Repeat}{do}{while} %Defualt: \SetKwRepeat{Repeat}{repeat}{until}
    %\SetKwFunction{func}{Subroutine}
    %\KwResult{Set of non-frozen bit indices, }
    \Input{sub-weight $w$, part size $t$, largest part $p_{max}=n$ }  %,$f_{\max}$}
    \Output{$\cP$}
    
    $\cP \gets \{\}$\;
    $\bp \gets [1, 2, \dots, t-1]$\; %\tcp*{Initialization of first $t-1$ parts}
    $\bp \gets \bp+[w-\text{sum}(\bp)]$\;
    \If{$p[t] <= p[t-1]$}{
        \KwRet $\cP$\;
    }
    \If{$p[t] \leq p_{\max}$}{
        $\cP \gets \cP \cup \{\bp\}$\;
    }
    \text{incr\_decr } $\gets $ True \tcp*{Operation: Interment-and-decrement }
    \While{\emph{True}}{
        \For{$i$ \emph{in} $[1,t-1]$}{
            \eIf{$i=1$}{
                $p^* \gets p[t-1]-1$
            }{
                $p^* \gets w\!-\!(i\!\cdot\!p[t\!-\!1\!-\!i]\!+\!\sum_{\!j=\!1}^i j) \!-\! \sum_{j\!=\!0}^{t\!-\!2\!-\!i}p[j]$\;
            }
            \eIf{$p[t-1-i]+i < p^*$}{
                $\bp \gets \bp[0\!:\!t\!-\!2\!-\!i]\!+[ p[t-1-i]\!+\!1,p[t\!-\!1\!-\!i]+2,\dots,p[t-1-i]+i+1]$\;
                $\bp \gets \bp + [w-\text{sum}(\bp)]$\;
                
                \If{$p[t-1] \leq p_{\max}$}{
                    $\cP \gets \cP \cup \{\bp\}$\;
                }
                \text{incr\_decr} $\gets $ True\;
                break\;
            }{
                \text{incr\_decr} $\gets $ False\;
            }
        }
        \If{$\emph{incr\_decr} =$ \emph{False} $\emph{ \bf and } i = t-1$}{
            break\;
        }
    }
    \KwRet $\cP$\;
\end{algorithm}

\section{Numerical Results and Discussion}\label{sec:num_results}
We consider two sample codes for the numerical evaluation of the proposed approach. The polarization-adjusted convolutional (PAC) code (64,44) \cite{arikan2} is constructed with Reed-Muller-polar rate-profile with design-SNR=2 dB and convolutional generator polynomial $[1,0,1,1,0,1,1]$. The extended BCH code (128,106) with the primitive polynomial $D^7+D^3+1$ and $t=3$. 
Note that the rows $\bh_{1}$ and $\bh_{2}$ in $\bH$ matrix for eBCH code (128, 106) satisfy the relationship $\supp(\bh_{2})\subset\supp(\bh_{1})$ where $\bh_2=\bone$ and $|\bh_1|/2=|\bh_2|=64$. Hence, for two constraints, we modify $\bh_1$ by $\bh_1=\bh_1\oplus\bh_2$. Similarly, the rows $\bh_{1}$, $\bh_{4}$, and $\bh_{5}$ in $\bH$ matrix for PAC code (64, 44) satisfy the relationship $\supp(\bh_{5})\subset\supp(\bh_{4})\subset\supp(\bh_{1})$ where $\bh_1=\bone$ and $|\bh_1|/2=|\bh_4|=2|\bh_5|=32$. The Python implementation of the proposed algorithm can be found in \cite{code}.

\subsection{Performance vs Queries}
Figs. \ref{fig:bler-pac} and \ref{fig:bler-ebch} show the block error rates (BLER) of the PAC  code (64, 44) and the extended BCH code (128,106), respectively, under the ORBGRAND with no constraints (NoC) and the segmented GRAND with the maximum number of queries based on \eqref{eq:check}, a.k.a abandonment threshold, $b=10^5,10^6$ . Note that the threshold $b$ in GRAND algorithms should be approximately $2^{n-k}$ queries \cite[ Theorem 2]{duffy-tit} to find the error pattern and get reasonable performance. 
%We artificially lower $b$ in Figs. \ref{fig:bler-pac} and \ref{fig:bler-ebch} to demonstrate the BLER improvement in the case of early abandonment. }
%When comparing the complexity curves, note that the vertical axis is on the logarithmic scale. %Moreover, we do not use the tuning technique proposed in Section \ref{sec:tuning} for further complexity reduction. 
%For improved rate-profile and alternative precoding for PAC codes, see \cite{rowshan-err_coef,rowshan-precoding}. 

\begin{figure}
    \centering
    \includegraphics[width=0.9\columnwidth]{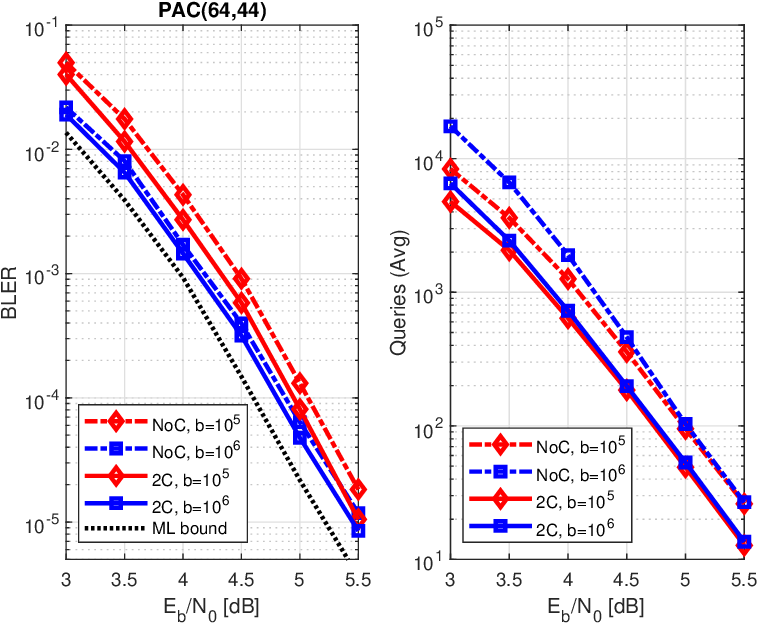}
    \caption{Performance comparison between three sub-pattern generators based on three constraints (3C) and a single generator with no constraints (NoC). %The maximum (sub-) pattern weight is $t=3$ for both. 
    The vertical axis is on the logarithmic scale for both queries and BLER.}
    \label{fig:bler-pac}
    %\vspace{-5pt}
\end{figure}

\begin{figure}
    \centering
    \includegraphics[width=0.9\columnwidth]{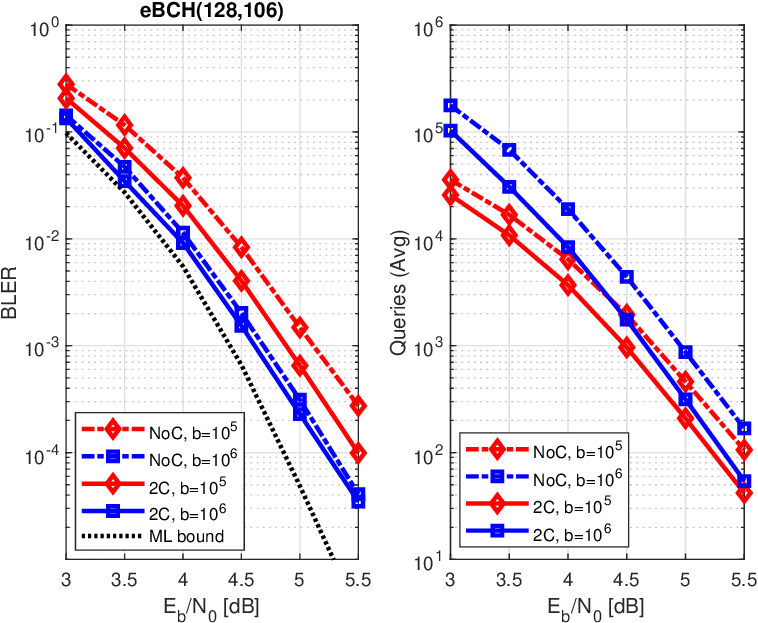}
    \caption{Performance comparison between three sub-pattern generators based on two constraints (2C) and a single generator with no constraints (NoC).} %The maximum (sub-) pattern weight is $t=5$ for both.}
    \label{fig:bler-ebch}
    %\vspace{-5pt}
\end{figure}

%Note that $b$ is the maximum number of queries based on \eqref{eq:check}. 
As expected, the average queries reduce significantly for both codes under segmented ORBGRAND. In the case of the PAC code (64,44), the average queries become half at high SNR regimes, while this reduction is larger at low SNR regimes. The reduction in average queries for eBCH(128,106) is more significant under the same abandonment thresholds as the short PAC code. Note that average queries for the short PAC code with different $b$'s are approaching at high SNR regimes due to the effectiveness of smaller $b$ at this code length. Furthermore, there is a BLER improvement where $b=10^5$; however, this improvement diminishes by increasing $b$ or under no abandonment as we will observe later. Note that unlike the comparisons in \cite{rowshan-const_GRAND} where the BLER was fixed and the impact of applying constraints on the average queries was studied, here we fix the maximum number of queries $b$ for both ORBGRAND and segmented ORBGRAND to have a fair comparison. As discussed in Section \ref{sec:complx}, in case of decoding failure by ORBGRAND, if we reduce the search space, we don't have to process many invalid error patterns. As a result, the first valid pattern may fall within the abandonment threshold $b$, and the segmented ORBGRAND would succeed. %Another difference is that there we are considering the practical condition of bounded error pattern weight. That is, we limit the weight of the error patterns to $t$, where $t$ is the number of distinct parts (a.k.a part size) in level-2 integer partitioning, for ORBGRAND and to $p\times t$, where $p$ is the number of segments, for segmented ORBGRAND. As the distribution of the erroneous symbols in the segments is random, we need to consider an identical $t$ in the ORBGRAND (where there is only one segment) and every segment in segmented ORBGRAND to tackle the extreme cases where all the erroneous symbols are located in one segment. An example of this extreme case is when we have two segments and four erroneous symbols but all these four symbols are located in one of the segments. 
In the table below, we show the average queries of two codes at $E_b/N_0=5$ dB (with two/three constraints, denoted by 2C/3C, and with no constraints/segmentation, denoted by NoC) for the maximum queries of $b=10^4,10^5$. The average queries are reduced by halves (in the case of two segments, it is slightly less than half, while in the case of three segments, it is more than half).
\begin{center}
\footnotesize
\begin{tabular}{c|cc|cc} 
 \hline
 & \multicolumn{2}{c}{PAC(64,44)} & \multicolumn{2}{c}{eBCH(128,106)} \\ 
 \hline
 & NoC & 3C & NoC & 2C  \\ 
%  \hline
% $b=10^2$ & 15.66 & 7.29 & 22.64 & 13.96  \\ 
%  \hline
% $b=10^3$ & 40.19 & 17.25 & 76.13 & 41.61  \\ 
%  \hline
% $b=10^4$ & 75.13 & 35.50 & 208.85 & 103.31  \\ 
 \hline
$b=10^5$ & 95.1 & 49.0 & 460.7 & 208.9  \\ 
 \hline
$b=10^6$ & 103.3 & 	53.2 &	872.7 & 314.9
  \\ 
 \hline
\end{tabular}
\end{center}
Note that if we maintain the BLER, the average query reduction is expected to approximately follow Lemma \ref{lma:seach_space_size} as it was shown numerically in \cite{rowshan-const_GRAND}.
% \begin{lemma}\label{lma:seach_space_size}
% Suppose we have a parity check matrix $\bH$ in which there are $p$ rows of $\bh_{j},j=j_1,j_2,...,j_p$ with mutually disjoint index sets $\cS_j = \supp(\bh_{j})$ that define $p$ segments, then the size of the search space by these $p$ parity check equations is
%     \begin{equation}\label{eq:search_space_size_less_than}
%         \Omega(\bh_{j_1},..,\bh_{j_p})  = 2^{n-p}.
%     \end{equation}
% \end{lemma}
% The reduction in the search space is reflected proportionally in the decrease of the average query needed for decoding every codeword. 
Note that here, with abandonment threshold, further reduction to meet the expectation in Lemma \ref{lma:seach_space_size} is traded with BLER improvement.

%Although we do not compare the results with a CRC-polar code with a short CRC or an improved PAC code (see \cite{rowshan-err_coef,rowshan-precoding}), they can perform close %to the case of PAC with maximum $b=10^5$ 
%in high SNR regime. 

Now, let us consider ORGBGRAND without abandonment. Fig. \ref{fig:bler-q-ebch-compar} compares the BLER and the (average) complexity of eBCH(128,106) under various decoding algorithms. The main benchmark is naturally ORBGRAND. Compared to ORBGRAND, segmented ORBGRAND reduces the average number of queries by three times, while BLER remains almost the same as before. 

We also compare it with the most popular MRB-based decoding algorithm, that is, ordered statistics decoding (OSD) with order $i$, as its relationship with its variants such as the box-and-match algorithm (BMA) \cite{valembois} and enhanced BMA \cite{jin} is known. Moreover, the reduction in the complexity of the variant comes at the cost of the increase in space complexity which makes the comparison unfair. For instance, the BMA reduces the computational complexity of OSD roughly by its squared root at the expense of memory, as the BMA with order $i$ considers all error patterns of weight at most $2i$ over $s$ most reliable positions ($s>k$). The BLER of OSD(2) is remarkable compared to other algorithms while it provides a reasonable complexity at low SNR regimes. Whereas ORBGRAND requires considerably fewer queries at high SNR regimes at the cost of degradation in BLER performance.

The other two algorithms used for comparison are Berlekamp-Massey Algorithm and Chase-II algorithm. Chase-II algorithm, denoted by Chase-II($t$), for decoding a code with the error-correcting capability of $t$ has the computational complexity of order $2^t\cdot O$(HD) as it uses a hard decision (HD) decoder, such as the Berlekamp-Massey Algorithm with the complexity of order $O(n^2)$,  in $2^t$ times as the decoder attempts all the error patterns with weight up to $t=\lfloor \frac{d_{min}-1}{2} \rfloor$ over the $t$ least reliable positions, hence, $\sum_{j=0}^{t} {t \choose j}=2^t$. In the case of eBCH(128,106), we have $t=3=\lfloor \frac{d_{min}-1}{2}  \rfloor$ where $d_{min}=7$. As can be seen, the BLER of the Berlekamp-Massey Algorithm and Chase-II algorithm is not comparable with OSD and ORBGRAND though they have a computational complexity of orders $O(2^{14})$ and $8\cdot O(2^{14})$, respectively. Furthermore, %To obtain a similar performance as ORBGRAND and OSD(2) or sGRAND, 
we observed that by increasing the total attempts to $2^t=2^8$, the Chase-II algorithm can approach the BLER of ORBGRAND as sown in Fig. \ref{fig:bler-q-ebch-compar}.

Furthermore, we use the early termination criterion as discussed in Section \ref{ssec:OSD}. This remarkably reduces the average queries of OSD(2) as shown in Fig. \ref{fig:bler-q-ebch-compar}. Lastly, we find the ML bound as follows: The ``ML bound" is determined by identifying instances where the optimal ML decoder would fail. During the simulations, each time a decoding error occurred, we compared the likelihood of the decoded codeword with that of the transmitted codeword. Specifically, we checked if the likelihood of the received signal given the decoded codeword, $W(\br \mid \textup{x}(\hat{\bc}))$, exceeded the likelihood of the received signal given the actual transmitted codeword, $W(\br \mid \textup{x}(\bc))$. If $W(\br \mid \textup{x}(\hat{\bc}))>$ $W(\br \mid \textup{x}(\bc))$, the ML decoder would also misinterpret the received signal and produce the same decoding error. Here, we use the squared Euclidean distance \eqref{eq:sq_euclid_dist} as a measure of likelihood.  This process allows us to estimate the performance bound of an ML decoder by identifying cases where any decoder, including the optimal one, would fail. As can be seen, the gap between the ML bound and OSD(2) in the high SNR regime is negligible. According to our observation, OSD(3) performance almost overlaps with the ML bound.
%It is worth noting that our observation showed that OSD(3) with significantly higher complexity, does not meaningfully improve the performance of eBCH(128,106) as OSD(2) has almost reached the ML performance. Hence, we did not plot it to have a fair comparison. Moreover, as can be seen, soft GRAND (sGRAND) as an ML decoder provides similar performance. Note that we did not plot the complexity of sGRAND as it requires sorting at each step which makes it incomparable with others though OSD also needs preprocessing (row operations to get a systematic form) as mentioned in Section \ref{sec:intro}. 

\begin{figure}
    \centering
    \includegraphics[width=0.9\columnwidth]{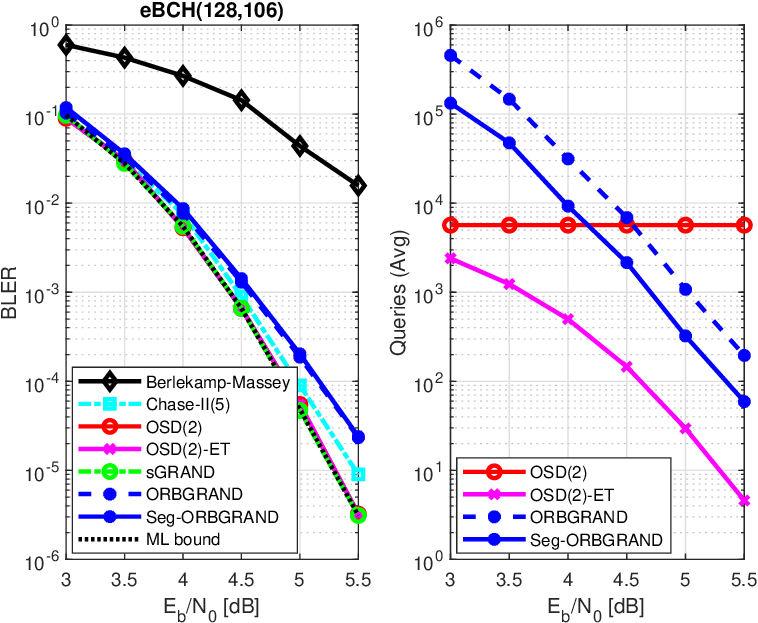}
    \caption{Performance and complexity comparisons of various decoding algorithms for eBCH(128,106).}
    \label{fig:bler-q-ebch-compar}
    %\vspace{-5pt}
\end{figure}

\subsection{Complexity: %Number of Queries vs 
Number of Operations}
The average number of queries per decoding may not fully capture the complexity of finding the error pattern for two reasons: (1) the computational complexity of each query in ORBGRAND differs from that in OSD($i$), and (2) the initial computational overhead, such as elementary row operations in OSD, is not the same in the two algorithms. The latter becomes especially significant at high SNR, where the average number of queries drops and the initial computations may dominate. In this section, we consider the average number of operations required to decode each received sequence. While these measures are implementation-dependent, they provide insight into the efficiency and true reduction in complexity achieved by the proposed scheme. The basic operations considered are addition, comparison, multiplication, and exclusive-OR (XOR) operation as the basic bit-wise operation. Note that we do not consider variable or vector assignments (value loading). %As a result, the decoding time will not be necessarily proportional with the total number of operations.  
%To combine these into a single metric, we assume that each integer addition and comparison is equivalent to eight and six XOR operations, respectively. %To come up with a single metric by combining these three, we assume that every integer addition and comparison are equivalent to eight and six XOR operations, respectively. 
To facilitate the comparison of algorithms, we convert the number of non-addition operations to addition-equivalent operations. Although arithmetic operations can be implemented differently in hardware, for simplicity, we consider ripple-carrying addition, the comparison implemented by bit-wise comparison of the most significant bit to least significant, or alternatively by subtraction and sign checking, and naive binary multiplication implemented by shifting the multiplier (one bit at a time, in total $m$ times) and adding the shifted multiplicand based on the multiplier's bits, similar to long multiplication. Table~\ref{tab:basic_operations} lists the order of complexity of these operations and the multiplicative factor used to obtain the addition-equivalent of the corresponding operation. For example, every multiplication of two $m$-bit numbers is equivalent to $m$ addition operations. 
Furthermore, we assume the use of a sort algorithm and initial transformation of the generator matrix with the complexities of order $O(n\log_2 n)$ comparisons and $O(n \cdot \min(k, n-k)^2)$ \cite{fossorier} bit-wise operations, respectively, where $n$ is the sequence length and $k$ is the code dimension. %, resulting in an overall complexity of $6n\log_2 n$. 
Note that segmentation by two would halve the length $n$, as discussed in Section \ref{sec:err_subpattern}. Hence, to find the complexity of decoding in terms of operations, we count all these operations individually in every decoding attempt and then convert the non-addition operations to addition-equivalent operations based on the multiplicative factors in Table \ref{tab:basic_operations}.
%We can express all operations as an ``addition-equivalent" number, as shown in Table \ref{tab:operations}. 

\begin{table}%[ht]
\centering
\footnotesize
\caption{Complexity of basic operations expressed as multiples of addition ($m$ is the number of bits).} \label{tab:basic_operations}
\begin{tabular}{c|c|c}
 & Complexity & \makecell{Multiplicative \\ Factor to Addition} \\
 \hline
Addition & $O(m)$ & 1 \\
Comparison & $O(m)$ & 1 \\
Multiplication & $O(m^2)$ & $m$ \\
eXclusive-OR & $O(1)$ & $1/m$ \\
\end{tabular}
\end{table}

Fig. \ref{fig:op-ebch-compar} illustrates the average total of all operations expressed in terms of equivalent addition operations for each decoding attempt, assuming that the real numbers are represented by $m=6$ bits. This is effectively a translation of the average queries in Fig. \ref{fig:bler-q-ebch-compar} into the average number of addition-equivalent operations. As shown, segmented ORBGRAND exhibits the lowest complexity in terms of average operations for $E_b/N_0\geq3$. Notably, at BLER$>10^{-2}$ (although it is not a desirable level) where the power gain/difference is small in Fig. \ref{fig:bler-q-ebch-compar}, the difference between the average number of operations in segmented ORBGRAND and OSD(2)-ET is not significant. 
For OSD with early termination, the primary contributor to the complexity of every query is the computation of the likelihood metric, such as the squared Euclidean distance and the correlation discrepancy, which is computationally cheaper and is used here, for each candidate codeword, whereas each query (generation of a new error pattern) in ORBGRAND is performed through a few addition-equivalent operations on the current logistic weight, as illustrated in Fig. \ref{fig:ex_k_parts_distinct} or Algorithm \ref{alg:dist_parts}. This process, before generating a new error pattern, may be equivalent to or slightly more complex than the process of keeping track of generating distinct error patterns in OSD. Note that the logistic weight in ORBGRAND serves to guide and track the generation of test error patterns in a specific order. This is different from the likelihood metric in OSD, which is used after error pattern generation for comparison purposes. %Thus, ORBGRAND does not use any metric after error pattern generation for comparison purposes as it is assumed the order inherently follows an order not too far ML order. %Furthermore, the order of pattern generation guided by the logistic weight in ORBGRAND is assumed roughly follows the do not require computing a likelihood metric for every test sequence.  

Although we can use some properties to reduce the complexity of computing the likelihood metric by taking advantage of difference between the error patterns, but that comes with computational overhead which contributes to latency. 
Observe that as the average number of queries approaches the lower bound of 1 for order-0 decoding, computing the likelihood metric and the initial computational overhead, including Gaussian elimination, dominate the overall complexity. Consequently, we see the slope of the pink curve flattening in the high SNR regime.

\begin{figure}
    \centering
    \includegraphics[width=0.8\columnwidth]{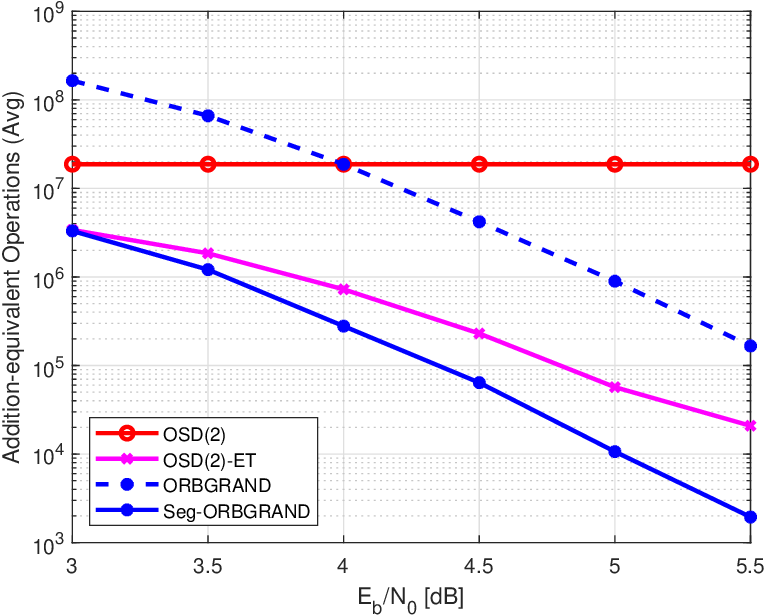}
    \caption{The average number of addition-equivalent operations for various decoding algorithms for eBCH(128,106).}
    \label{fig:op-ebch-compar}
    %\vspace{-5pt}
\end{figure}

The reduction in the average number of operations in segmented ORBGRAND is due to the reuse of generated sub-patterns in multiple error patterns. %, as discussed in Section \ref{}. 
This saves a significant number of operations. However, this approach requires a pool of pregenerated sub-patterns. The maximum pool size used for the above results is $248$ sub-patterns. It can also be implemented without this pool by combining every generated sub-pattern of one segment with all the sub-patterns of other segment(s) that meet the sub-weight and overall weight. The design of an efficient hardware architecture can be the subject of future work. 

As mentioned in Section \ref{sec:complx}, the number of queries required to achieve ML performance with GRAND is of the order of $2^{n-k}$ \cite[Theorem 2]{duffy-tit}. This makes GRAND more suitable for very high code rates. Consequently, as the code rate decreases, a significant increase in complexity is expected. If we limit the number of possible queries by imposing an abandonment threshold, the performance would significantly degrade. Let us examine this by considering eBCH(128,99,10), which has a slightly lower code rate compared to eBCH(128,106). Here, we limit the queries to $b=10^7$. As shown in Fig. \ref{fig:op-ebch-compar_128_99}, both the gap between the BLER curves and the average number of operations increase. This aligns with our expectation and indicates that GRAND is recommended only for very high-rate codes, where a good performance can be obtained with relatively low average complexity and small  abandonment threshold. 

%Furthermore, we compare the complexity of Segmented ORBGRAND and \cite{fossorier1996computationally} for eBCH(128,99,10) at BER=$10^{-6}$ in Table \ref{tab:operations_OSDvsORBGRAND}. Note that the term `partial' in the table refers to a reduced probabilistic list of codeword candidates, which is remarkably more efficient than processing all candidate codewords, referred to as `complete'. {\color{blue} However, this complexity gain comes at the cost of a higher SNR.} 

% \begin{table}%[ht]
% \centering
% \footnotesize
% \caption{Decoding Complexity: OSD vs ORBGRAND} \label{tab:operations_OSDvsORBGRAND}
% \begin{tabular}{c|cc|cc}

%  %\hline
 
%   & \multicolumn{2}{c}{OSD(3) \cite{fossorier1996computationally}} & \multicolumn{2}{c}{ORBGRAND} \\
%   \hline
% \multirow{2}{2.5cm}{Addition-equivalent Operations} & Complete & Partial & NoC & 2C \\ \cline{2-5}
% %\hline
% % Addition  &  &  & 2409.3 & 737.1 \\
% % Comparison  &  &  & 2465.8 & 814.2 \\
% % Exclusive-OR &  &  & 2,663,148.3 & 33,383.9 \\
% % %Avg query time [ms] &  &  &  &  &  &  \\
% % \hline
%  & $4,692,968$ & $246,601$ & $448,733$ & $7,115$ \\

% \end{tabular}
% \end{table}
%Note that GRAND is more suitable for high code rates. %with relatively lower code rate, compared to eBCH(128,106)

In terms of relative decoding speedup (the difference in the decoding time) from high SNR to low SNR points ($E_s/N_0=0.5-2.5$ dB almost equivalent to $E_b/N_0=3.5-5.5$ for $R=0.5$), we compare our work with \cite{choi2020fast} where about 100 times decoding speedup for the (127,43) code was reported, the speedup of our work is also about the same, as the table above shows. Note that a comparison of the decoding time in a fair way is not possible as the reported time depends on the CPU clock frequency, cache size and architecture, system load and configuration, the choice of programming language and its associated compiler, etc.

% give example, consider worst case for additional storage for the pools. Show in a figure the pool(s).
% The distribution of maximum w_L and seg w_L for every SNR
% for w_L=61 (when max iter is 10^5)

\begin{figure}
    \centering
    \includegraphics[width=0.9\columnwidth]{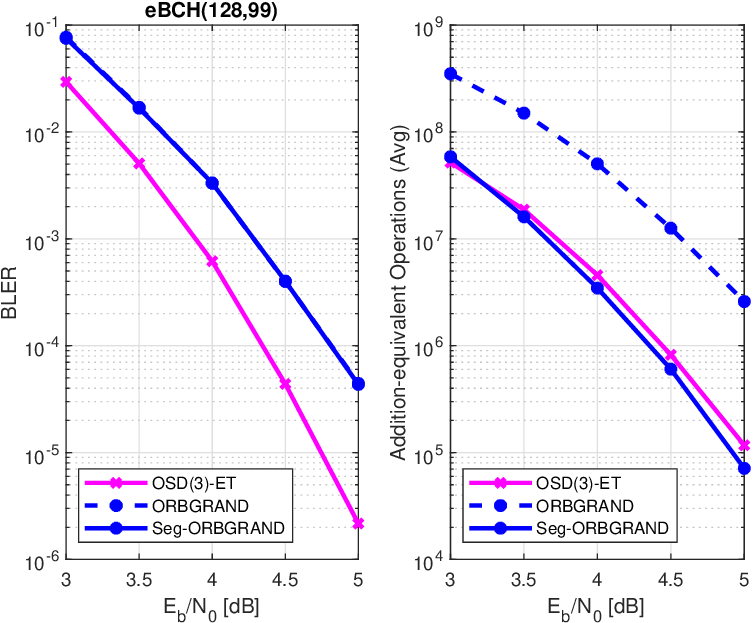}
    \caption{Performance and complexity comparisons of various decoding algorithms for eBCH(128,99).}
    \label{fig:op-ebch-compar_128_99}
    %\vspace{-5pt}
\end{figure}

\section{CONCLUSION} %AND FUTURE DIRECTIONS}
In this paper, we propose an approach to divide the search space for the error sequence induced by channel noise through segmentation. Each segment is defined based on parity constraints extracted from the parity check matrix. We then employ multiple error pattern generators, each dedicated to one segment. We introduce a method to combine these sub-patterns in a near-ML order for checking. Since this approach generates valid error patterns with respect to the selected parity constraints, both the average number of queries and the block error rate (BLER) performance (under abandonment only) improve significantly. Additionally, the reuse of pre-generated sub-patterns in forming new error patterns reduces the number of operations for each query. Consequently, alongside the reduction in the average number of queries, the decoding time decreases considerably, down to one-fifth of ORBGRAND. The study of the tradeoff between memory requirements based for various scheduling schemes on one hand and throughout on the other remians as future work for hardware architecture design.

%\addtolength{\textheight}{-12cm}   % This command serves to balance the column lengths
                                  % on the last page of the document manually. It shortens
                                  % the textheight of the last page by a suitable amount.
                                  % This command does not take effect until the next page
                                  % so it should come on the page before the last. Make
                                  % sure that you do not shorten the textheight too much.

%%%%%%%%%%%%%%%%%%%%%%%%%%%%%%%%%%%%%%%%%%%%%%%%%%%%%%%%%%%%%%%%%%%%%%%%%%%%%%%%

%%%%%%%%%%%%%%%%%%%%%%%%%%%%%%%%%%%%%%%%%%%%%%%%%%%%%%%%%%%%%%%%%%%%%%%%%%%%%%%%

%%%%%%%%%%%%%%%%%%%%%%%%%%%%%%%%%%%%%%%%%%%%%%%%%%%%%%%%%%%%%%%%%%%%%%%%%%%%%%%%
%\Section*{APPENDIX}

%Appendixes should appear before the acknowledgment.

%\Section*{ACKNOWLEDGMENT}
%The authors would like to thank Dr. Lilian Khaw, Fariba Abbasi and Viduranga Wijekoon for their comments.

%\newpage

\end{document}